

%
%
%
\documentclass{amsart}
\usepackage{mathrsfs}

\newtheorem{theorem}{Theorem}

\newtheorem{proposition}{Proposition}[section]
\newtheorem{lemma}{Lemma}[section]
\newtheorem{corollary}{Corollary}[section]

\theoremstyle{definition}

\theoremstyle{remark}

\numberwithin{equation}{section}



\def\rd{\text{\rm d}}
\def\re{\text{\rm e}}
\def\ri{\text{\rm i}}

\def\Real{\text{\rm Re}}
\def\Imag{\text{\rm Im}}

\begin{document}

\title[Pad\'{e} approximants of random Stieltjes series]{Pad\'{e} approximants
of random Stieltjes series}

\author{Jens Marklof}
\address{School of Mathematics\\
        University of Bristol\\
        Bristol BS8 1TW, United Kingdom}
\email{j.marklof@bristol.ac.uk}

\author{Yves Tourigny}
\address{School of Mathematics\\
        University of Bristol\\
        Bristol BS8 1TW, United Kingdom}
\email{y.tourigny@bristol.ac.uk}

\author{Lech Wo{\l}owski}
\address{School of Mathematics\\
        University of Bristol\\
        Bristol BS8 1TW, United Kingdom}
\email{lech@ucdavis-alumni.com}

\thanks{The authors gratefully acknowledge the support
of the Engineering and Physical Sciences Research Council
(United Kingdom) under Grant GR/S87461/01 and of the Leverhulme Trust under a Philip
Leverhulme Prize (JM). The second author would also like to thank Professor Alain Comtet
for his hospitality during a visit to the Laboratoire de Physique Th\'{e}orique et 
Mod\`{e}les Statistiques, Universit\'{e} de Paris-Sud (Orsay), and for
many useful discussions while some of this work was carried out.}

\subjclass{Primary 15A52, 11J70}



\keywords{Pad\'{e} approximation, disordered systems, random continued fractions}

\begin{abstract}
We consider the random continued fraction
$$
S(t) := \cfrac{1}{s_1 + \cfrac{t}{s_2 + \cfrac{t}{s_3 + \cdots}}}\,, \quad t \in {\mathbb C} \backslash {\mathbb R}_-\,,
$$
where the $s_n$ are independent random variables 
with the same gamma distribution. Every realisation
of the sequence 
defines
a Stieltjes function that can be expressed as
$$
S(t) = \int_{0}^\infty \frac{\sigma (\rd x)}{1+xt}\,,\quad t \in {\mathbb C} \backslash {\mathbb R}_-\,,
$$
for some measure $\sigma$ on the positive half-line. 
We study the convergence of the finite truncations of the continued fraction or, equivalently,
of the diagonal Pad\'{e} approximants of the function $S$. By using the Dyson--Schmidt method
for an equivalent one-dimensional disordered system, and
the results of Marklof {\em et al.} (2005), we obtain explicit formulae (in terms of modified Bessel functions)
for the almost-sure rate of convergence of these approximants,
and for the almost-sure distribution of their poles.
\end{abstract}

\maketitle

\section{Introduction}
Let $\mathbf s = (s_1,s_2,\ldots)$ be a sequence of positive real numbers and 
consider the analytic continued
fraction
\begin{equation}
S(t) := \cfrac{1}{s_1 + \cfrac{t}{s_2 + \cfrac{t}{s_3 + \cdots}}}\,, \quad t \in {\mathbb C} \backslash {\mathbb R}_-\,.
\label{continuedFraction}
\end{equation}
This continued fraction defines a {\em Stieltjes function}; it can be represented in integral
form as
\begin{equation}
S(t) = \int_0^\infty \frac{\sigma (\rd x)}{1+t x}
\label{stieltjesIntegral}
\end{equation}
for some measure $\sigma$ supported on the non-negative half-line such that the {\em moments}
\begin{equation}
m_n := \int_0^\infty x^n \sigma(\rd x), \quad n \in {\mathbb N},
\label{moments}
\end{equation}
exist. 
By an obvious use of the geometric series, every Stieltjes
function can be expanded formally in powers of $t$:
\begin{equation}
S(t) \sim \sum_{j=0}^\infty m_j \,(-t)^j \quad \text{as $t \rightarrow 0+$}\,.
\label{series}
\end{equation}
Hence $S$
is the {\em moment generating function} of the measure $\sigma$. 
It is a well-known fact of great practical importance that, given the first $n$ of the moments, one
may construct the rational function
\begin{equation}
S_n(t) := \frac{P_n(t)}{Q_n(t)} =
\cfrac{1}{s_1 + \cfrac{t}{s_2+ \cdots + \cfrac{t}{s_{n}}}}\,, \quad t \in {\mathbb C} \backslash {\mathbb R}_-\,,
\label{convergent}
\end{equation}
where 
$$
\deg P_n = \begin{cases}
n/2-1 & \text{if $n$ is even} \\
(n-1)/2 & \text{if $n$ is odd}
\end{cases}
\quad \text{and} \quad
\deg Q_n = \begin{cases}
n/2 & \text{if $n$ is even} \\
(n-1)/2 & \text{if $n$ is odd}
\end{cases}\,.
$$
This truncation of the continued fraction (\ref{continuedFraction})
has a MacLaurin expansion whose $n$th partial sum agrees
with that of the series (\ref{series}). Hence $S_n$ is a diagonal (if $n$ is odd), 
or near-diagonal (if $n$ is even) {\em Pad\'{e}
approximant} of $S$.

Now suppose that the $s_n$ are independent positive random variables with the same distribution,
say $\mu$.
We shall consider the following questions: 
\begin{enumerate}
\item What are the almost-sure analytic properties
of these Stieltjes functions? 
\item What is the almost-sure leading asymptotic behaviour of the error $S(t)-S_n(t)$
as $n \rightarrow \infty$? 
\end{enumerate}

These questions are of interest because Pad\'{e} approximation is widely used in applied mathematics
as a practical means of accelerating the convergence of the partial sums of series obtained
by perturbation methods. As pointed out by Bender \& Orszag (1978), the consideration of many particular
cases where the $s_n$ are {\em deterministic} reveals a wide range of large-$n$ behaviours.
Our motivation for studying the random case is to gain some insight into the asymptotic behaviour
of Pad\'{e} approximation in the ``generic'' case.

Our study of diagonal Pad\'{e} approximation reduces to aspects of
the large-$n$ behaviour of the denominators $Q_n$.
For this reason, as in the deterministic case,
the cornerstone of the analysis
is the three-term recurrence relation
\begin{equation}
Q_{n+1} = t Q_{n-1} + s_{n+1} Q_n\,, \quad Q_{-1} = 0,\;\; Q_0 = 1\,.
\label{threeTermRecurrence}
\end{equation}
(The $P_n$ satisfy the same recurrence relation, albeit with different
initial conditions.)
This recurrence relation makes a link between Pad\'{e} approximation and a
rich set of other mathematical entities, such as orthogonal polynomials,
products of random matrices, and discrete Schr\"odinger-like operators.
By exploiting results that are well-known in these related fields, one
may obtain--- for a very large class of distributions $\mu$ of the coefficients
$s_n$--- some partial answers to the questions stated earlier.
Our contribution in the present paper is to elaborate the  
particular case where $\mu$ is the gamma distribution. More precisely, we obtain
explicit formulae (in terms of Bessel functions) for the leading term in the 
asymptotic behaviour of the error of Pad\'{e} approximation and for the asymptotic
distribution of the poles, as well as the location of the essential spectrum of
the measure $\sigma$.

In the remainder of this introductory section, we describe briefly 
the key ideas underlying the
analysis. Then we summarise our main results in the form of a theorem. 

\subsection{The moment problem}
The Stieltjes moment problem is, given a sequence $\{m_n\}_{n \in {\mathbb N}}$,
to determine whether or not there exists a measure $\sigma$ such that equation (\ref{moments})
holds for every $n$. 
Historically, mathematical objects such as the analytic continued 
fraction (\ref{continuedFraction}), orthogonal polynomials
and Pad\'{e} approximants were introduced as tools in the study of
this moment problem (Akhiezer 1961; Nikishin \& Sorokin 1988; Simon 1998). 
Stieltjes (1894) showed that a necessary
and sufficient condition for the {\em existence}
of a measure $\sigma$ with the prescribed moments is 
\begin{equation}
\forall \; n \in {\mathbb Z}_+\,, \quad s_n > 0\,.
\label{stieltjesExistenceCondition}
\end{equation}
He also showed that 
\begin{equation}
\sum_{n=1}^\infty s_n = \infty
\label{stieltjesUniquenessCondition}
\end{equation}
is a necessary and sufficient condition for the {\em uniqueness} of the measure.

Let us assume that condition (\ref{stieltjesExistenceCondition}) holds and describe in very
broad terms one way 
of ``reconstructing'' $\sigma$ from
its moments;
see Akhiezer (1961) and Nikishin \& Sorokin (1988) for a detailed treatment. 

Recall that $x'$ is a {\em point of increase} of the measure $\sigma$ if
$$
\forall\; \varepsilon > 0, \quad \int_{\max \{ x'-\varepsilon,0\}}^{x'+\varepsilon}
\sigma (\rd x) >0\,.
$$
The {\em spectrum} of $\sigma$ is the set of its points of increase and will be denoted
$\text{spec}(\sigma)$.
For $x > 0$, we shall
denote by $\delta_{x}$ the probability measure on ${\mathbb R}_+$
whose only point of increase is $x$.

Given the $m_n$, we may define an
inner product, say $(\cdot,\cdot)_{\mathbf m}$, on the space of polynomials as follows: if
$p$ and $q$ are two polynomials with coefficients $p_i$ and $q_i$ respectively, then
$$
(p,q)_{\mathbf m} := \sum_{i,j} m_{i+j} \,p_i \,q_j\,.
$$
Knowing the moments, we may 
compute the
$s_n$ and the $S_n$. We remark that $P_{2n}$ and $Q_{2n}$ are 
polynomials of degree $n-1$ and $n$ respectively. 
Set
\begin{equation}
\psi_n (\lambda) := 
\sqrt{s_{2n+1}} \,\lambda^n Q_{2n} (-1/\lambda)\,.
\label{psiDefinition}
\end{equation}
Then the fact that $S_{2n}$ matches the moment generating series
to $O(t^{2n})$ implies that $\psi_n$ is orthogonal
to every polynomial of degree less than $n$, in the sense of
the inner product $(\cdot,\cdot)_{\mathbf m}$. It follows (see
Akhiezer 1961, Chapter 1) that the roots of $\psi_n$ are
simple and lie in ${\mathbb R}_+$; denote them by
$$
0 \le  \lambda_{n,1} < \lambda_{n,2} < \cdots < \lambda_{n,n} < \infty\,.
$$
Gaussian quadrature then defines
a discrete measure
\begin{equation}
\sigma_n := \sum_{j=1}^n \sigma_{n,j} \delta_{\lambda_{n,j}}\,, \quad \text{where}\;\;
\sigma_{n,j} = \left ( \sum_{\ell=0}^{n-1} \psi_\ell^2(\lambda_{n,j}) \right )^{-1}\,,
\label{quadratureFormula}
\end{equation}
that converges weakly to a measure $\sigma$
that solves the moment problem. In particular, the
inner product $(\cdot,\cdot)_{\mathbf m}$ coincides with the inner
product in $L_\sigma^2 ({\mathbb R}_+)$.

The moment problem can also be approached from the point of view of operator theory.
The recurrence relation (\ref{threeTermRecurrence}) for the $Q_n$ implies the
following recurrence relation for the $\psi_n$:
\begin{align*}
v_0 \psi_0 + h_0 \psi_1 &= \lambda \psi_0 \quad \text{if $n=0$}\,, \\
h_{n-1} \psi_{n-1} + v_n \psi_n + h_{n} \psi_{n+1} &= \lambda \psi_n \quad \text{if $n \in {\mathbb Z}_+$}
\end{align*}
where
\begin{equation}
\notag
v_n = \begin{cases}
\frac{1}{s_1 s_2} & \text{if $n=0$} \\
\frac{1}{s_{2n+1}} \left ( \frac{1}{s_{2n}} + \frac{1}{s_{2n+2}} \right ) & \text{if $n \in {\mathbb Z}_+$}
\end{cases}
\,,
\quad
h_n = 
\frac{1}{s_{2n+2} \sqrt{s_{2n+1} s_{2n+3}}}\,, \;\; n \in {\mathbb N}
\,.
\label{differenceCoefficients}
\end{equation}
These numbers may be used to define a 
certain Jacobi operator, say ${\mathscr J}$,
with a domain contained in the Hilbert space $\ell^2({\mathbb N})$, as follows:
first, we consider sequences $\boldsymbol{\xi}
= (\xi_0,\,\xi_1,\,\xi_2,\,\ldots)$ with only finitely many terms and set
\begin{equation}
\left ( {\mathscr J} \boldsymbol{\xi} \right )_n := \begin{cases}
v_0 \xi_0 + h_0 \xi_1 & \text{if $n=0$} \\
h_{n-1} \xi_{n-1} + v_n \xi_n + h_n \xi_{n+1} & \text{otherwise}
\end{cases}\,.
\label{definitionOfJ}
\end{equation}
Given condition (\ref{stieltjesUniquenessCondition}), it is then possible to extend this 
definition uniquely to obtain an essentially self-adjoint operator; we use the same symbol ${\mathscr J}$ to
refer to this extension. 
It may then be proved
that the moments of the spectral measure of the operator ${\mathscr J}$ 
are precisely the $m_n$, and so this spectral measure coincides with $\sigma$; see Nikishin \& Sorokin (1988).

\subsection{The density of states}
\label{densityOfStatesSubsection}

Consider the finite-dimensional truncation
$$
{\mathscr J}_n := \begin{pmatrix}
v_0 & h_0 & 0 & 0 & \cdots & 0 & 0 & 0 \\
h_0 & v_1 & h_1 & 0 & \cdots & 0 & 0 & 0 \\
0 & h_1 & v_2 & h_2 & \cdots & 0 & 0 & 0 \\
\vdots & \ddots & \ddots & \ddots & \ddots & \ddots & \ddots & \vdots  \\
0 & 0 & 0 & 0 & \cdots & h_{n-3} & v_{n-2} & h_{n-2} \\
0 & 0 & 0 & 0 & \cdots & 0 & h_{n-2} & v_{n-1} 
\end{pmatrix}
$$
of the operator $\mathscr{J}$. The spectrum of $\mathscr{J}_n$
is the set of zeroes of the polynomial $\psi_n$
defined by equation (\ref{psiDefinition}).
By the Chebyshev--Markov--Stieltjes theorem (see Nikishin \& Sorokin 1988, \S 2.8), between any two zeroes
of $\psi_n$, there is a point
of increase of $\sigma$; so we are led to studying the distribution of the $\lambda_{n,j}$.

Define a measure $\kappa_n$ on ${\mathbb R}_+$ by
\begin{equation}
\kappa_n := \frac{1}{n} \sum_{j=1}^n \delta_{\lambda_{n,j}}\,.
\label{discreteDensityOfStates}
\end{equation}
This measure is the normalised eigenvalue counting measure of the matrix $\mathscr{J}_n$.
Indeed, we have
$$
N_n (\lambda) := \frac{\# \left \{j\,:\; \lambda_{n,j} < \lambda \right \}}{n}
= \int_0^\lambda \kappa_n (\rd \lambda')\,.
$$

The normalised counting measure $\kappa_n$ has a weak limit, say $\kappa$, as $n \rightarrow \infty$, and so
there is a function $N$, called the {\em integrated density of states} of $\mathscr{J}$,
defined by
$$
N(\lambda) := \int_0^\lambda \kappa (\rd \lambda) = \lim_{n \rightarrow \infty} N_n (\lambda)  \,.
$$
If $\kappa$
is absolutely continuous, one can also speak of the {\em density of states}, say $\varrho$, defined
by
\begin{equation}
\kappa(\rd \lambda) = \varrho(\lambda) \,\rd \lambda\,.
\label{densityOfStates}
\end{equation}
Although
the measures $\kappa$ and $\sigma$ may be
very different, their essential spectra are the same. In the context of Pad\'{e} approximation, the
integrated density of states describes the distribution of the poles of the approximants.

\subsection{Krein's string}
\label{KreinSubsection}
There is an interpretation, due to Krein, of the spectrum of the operator $\mathscr{J}$ in terms of the
characteristic frequencies of a vibrating string (Akhiezer 1961). Consider a weightless, infinite,
perfectly elastic string, tied at one endpoint $x=0$, along which some beads are distributed. Let
$s_{2n}$ be the mass of the $n$th bead, and denote by $(x_n,y_n)$ its position in the $xy$-plane.
We assume that the $x_n$ are fixed and given by the recurrence relation
$$
x_{n+1} = x_n + s_{2n+1}\,, \quad x_0 = 0\,.
$$
For a string of uniform unit tension, the small vertical motion is then described by the
discrete wave equation
\begin{equation}
s_{2n} \ddot{y}_n = \frac{y_{n+1}-y_n}{s_{2n+1}} - \frac{y_n-y_{n-1}}{s_{2n-1}}\,, 
\quad n \in {\mathbb Z}_+\,.
\label{waveEquation}
\end{equation}
To study the characteristic frequencies of the string, we set
$$
y_n = \eta_n \sin \left ( \omega t \right ) \quad \text{and} \quad
\xi_n = \frac{\eta_{n+1}-\eta_n}{s_{2n+1}}\,.
$$
Then equation (\ref{waveEquation}) reduces to
$$
\frac{1}{s_1 s_2} \xi_1 - \frac{1}{s_1 s_2} \xi_0 = - \omega^2 \xi_0 
$$
and 
$$
\frac{1}{s_{2n+1} s_{2n+2}} \xi_{n+1} - \frac{1}{s_{2n+1}} 
\left ( \frac{1}{s_{2n}}+\frac{1}{s_{2n+2}} \right ) \xi_n
+ \frac{1}{s_{2n} s_{2n+1}} \xi_{n-1} = - \omega^2 \xi_n
$$
where $n \ge 1$. Comparing this with the definitions of $\mathscr{J}$ and $\psi_n$
given earlier, it is readily seen that
$$
\xi_n = 
\frac{(-1)^n \xi_0}{\sqrt{s_{2n+1}}} \psi_n ( \omega^2 ) 
\,. 
$$

\subsection{The complex Lyapunov exponent}
\label{lyapunovSubsection}
Dyson (1953) developed a method for studying the characteristic frequencies of the one-dimensional
disordered chain
\begin{equation}
\ddot{y}_n = c_{2n-1} (y_{n+1}-y_n )- c_{2n-2} (y_n-y_{n-1})\,,
\quad n \in {\mathbb Z}_+\,.
\label{dysonChain}
\end{equation}
Here, $c_{2n-1}$ and $c_{2n-2}$
are the ratios of the elastic modulus of the $n$th spring and
of the mass of the two particles attached to it. Disorder may be modelled in many ways; for instance by
assuming that the $c_n$ are independent and identically distributed.
The approach was
later simplified by Schmidt (1957) and applied
to the tight-binding Anderson model for a one-dimensional crystal with impurities. 

Luck (1992) gives a very readable, well-motivated account
of the Dyson--Schmidt approach; in brief, it builds on the intimate connection between
second-order difference equations, continued fractions and Markov chains.
For our purpose, it will be convenient
to work with 
the random difference equation
\begin{equation}
u_{n+1} - u_{n-1} = \frac{s_{n+1}}{\sqrt{t}}\,u_n\,, \quad n=0\,,1\,,2\,,\ldots
\label{differenceEquation}
\end{equation}
where 
$t$ is a parameter in ${\mathbb C} \backslash {\mathbb R}_-$, and $\sqrt{\cdot}$ is the branch of the square
root function defined on ${\mathbb C} \backslash {\mathbb R}_-$ that returns a number with a non-negative
real part.
We note the obvious
\begin{lemma} 
For every $t \in {\mathbb C} \backslash {\mathbb R}_-$,
$$
Q_n (t) = \left ( \sqrt{t} \right )^n u_n\,, 
$$
where $u_n$ solves the difference equation (\ref{differenceEquation}) with $u_{-1}=0$ and $u_0=1$.
\label{denominatorLemma}
\end{lemma}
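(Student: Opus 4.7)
The plan is to prove the identity by induction on $n$, using the three-term recurrence (\ref{threeTermRecurrence}) as the basis and showing that the substitution $Q_n(t) = (\sqrt{t})^n u_n$ turns it into the difference equation (\ref{differenceEquation}).

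First, I would verify the base cases. Setting $u_{-1}=0$ and $u_0=1$, the formulas $Q_{-1}=(\sqrt{t})^{-1}\cdot 0 = 0$ and $Q_0 = (\sqrt{t})^0\cdot 1 = 1$ match the initial conditions given just after (\ref{threeTermRecurrence}). Next, I would assume inductively that $Q_{n-1} = (\sqrt{t})^{n-1} u_{n-1}$ and $Q_n = (\sqrt{t})^n u_n$, substitute these into (\ref{threeTermRecurrence}), and check that the resulting identity
$$
(\sqrt{t})^{n+1} u_{n+1} = t \,(\sqrt{t})^{n-1} u_{n-1} + s_{n+1} (\sqrt{t})^n u_n
$$
reduces, upon division by $(\sqrt{t})^{n+1}$ and use of $t/(\sqrt{t})^2 = 1$, to
$$
u_{n+1} = u_{n-1} + \frac{s_{n+1}}{\sqrt{t}}\,u_n\,,
$$
which is precisely the recurrence (\ref{differenceEquation}). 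Since this is exactly the equation $u_{n+1}$ satisfies, the induction step closes.

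There is no real obstacle here: the identity is essentially a change of dependent variable, and the fact that $t \in {\mathbb C}\backslash {\mathbb R}_-$ ensures that $\sqrt{t}$ is unambiguously defined via the prescribed branch, so all the manipulations are valid. The only point worth stressing is that dividing by $(\sqrt{t})^{n+1}$ is legitimate because $t \neq 0$ on the domain under consideration, and the principal-branch convention makes $(\sqrt{t})^2 = t$ hold identically on ${\mathbb C}\backslash {\mathbb R}_-$.
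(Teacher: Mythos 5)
Your induction argument is correct and is exactly the routine change-of-variable verification the paper has in mind: the lemma is stated there as ``obvious'' with no written proof, and your substitution of $Q_n=(\sqrt{t})^n u_n$ into the three-term recurrence (\ref{threeTermRecurrence}), together with the check of the initial conditions and the identity $(\sqrt{t})^2=t$ on ${\mathbb C}\backslash{\mathbb R}_-$, is precisely the intended justification. Nothing is missing.
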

The relevant
continued fraction is
\begin{equation}
Z  := \sqrt{t} \,S(t) = \cfrac{1}{\frac{s_1}{\sqrt{t}} + \cfrac{1}{\frac{s_2}{\sqrt{t}} + \cfrac{1}{\frac{s_3}{\sqrt{t}} + \cdots}}}
\label{complexContinuedFraction}
\end{equation}
and we write
\begin{equation}
Z_n = \sqrt{t}\,S_n
\label{Zn}
\end{equation}
for its truncation.
Let $u_{-1}$ and $u_0$ be complex random variables. Then equation (\ref{differenceEquation})
defines a sequence of general term $u_n$ by recurrence. The distribution $\nu_{\mu}$ of 
the random variable $Z$ 
is a stationary distribution for the Markov chain
\begin{equation}
\hat{\mathbf Z}  = \left ( \hat{Z}_0,\,\hat{Z}_1,\,\hat{Z}_2,\, \ldots \right ) \,, \quad \text{where}
\;\; \hat{Z}_n := \frac{u_{n-1}}{u_n}\,.
\label{markovChain}
\end{equation}
(In the terminology of iterated random maps, the random variables
$Z_n$ and $\hat{Z}_n$ are, respectively, the {\em backward} and {\em forward iterates}
associated with the continued fraction; when $u_{-1}=0$ and $u_0=1$, they have the same distribution, but their asymptotic
behaviours are very different; see Diaconis \& Freedman 1999.)

It follows that the growth of the $u_n$ may be quantified by means of the
{\em complex Lyapunov exponent} defined by
\begin{equation}
\Lambda_{\mu} (t) := -\int_{\mathbb C} \ln z \,\nu_{\mu} (\rd z)\,,
\label{characteristicExponent}
\end{equation}
where $\ln \cdot$ denotes the principal branch of the logarithm. 
Indeed, standard results from the theory of Markov chains imply
that, if $\hat{\mathbf Z}$ has a unique stationary distribution, then 
\begin{equation}
\frac{\ln u_n}{n} = \frac{\ln u_0}{n} - \frac{1}{n} \sum_{j=1}^n \ln \hat{Z}_j \xrightarrow[n \rightarrow \infty]{} \Lambda_{\mu} (t)
\label{markovChainTheorem}
\end{equation}
for almost every realisation of ${\mathbf s}$,
independently of the choice of $\hat{Z}_0$ (Breiman 1960; Furstenberg 1963; Meyn \& Tweedie 1993). 
In particular, we have the formula 
$$    
\lim_{n \rightarrow \infty} \frac{\ln |u_n|}{n} = \Real \,\left [ \Lambda_{\mu} (t)\right ]\,.
$$
Equation (\ref{markovChainTheorem})
is also central to
the study of the integrated density of states of the operator $\mathscr{J}$ introduced earlier (Dyson 1953;
Schmidt 1957; Luck 1992).
We shall see that, under a very mild assumption on the distribution $\mu$ of the $s_n$,
$$
N(\lambda) = - \frac{2}{\pi} \, \Imag \left [
\Lambda_{\mu} \left ( -1/\lambda + \ri 0 + \right ) \right ]\,.
$$

\subsection{Furstenberg's theorem}
\label{furstenbergSubsection}
In order to carry out this programme, we shall also make use of the connection between
the Markov chain $\hat{\mathbf Z}$ 
and the product of random matrices
\begin{equation}
{\mathcal U}_n := {\mathcal A}_n {\mathcal A}_{n-1} \cdots {\mathcal A}_1\,, \quad n=0\,,1\,,2\,,\ldots\,,
\label{productOfMatrices}
\end{equation}
where
\begin{equation}
{\mathcal A}_n := \begin{pmatrix}
0 & 1 \\
1 & \frac{s_n}{\sqrt{t}}
\end{pmatrix}\,.
\label{definitionOfA}
\end{equation}
The distribution $\mu$ from which the $s_n$ are drawn induces, via equation (\ref{definitionOfA}), a distribution $\tilde{\mu}$
on the group of unimodular $2 \times 2$ matrices. The fundamental
results of Furstenberg \& Kesten (1960)
and Furstenberg (1963) --- which are commonly referred to as `Furstenberg's theorem'---
imply in particular that,
under very mild assumptions on $\mu$, there is a unique measure $\tilde{\nu}_{\tilde{\mu}}$ on the
group of unimodular matrices that is invariant under the
action of the random matrix (\ref{definitionOfA}). Furthermore, the number
\begin{equation}
\gamma_{\tilde{\mu}} := \frac{1}{n} {\mathbb E} \left ( \ln \left | {\mathcal U}_n 
\right | \right ),
\label{realLyapunovExponent}
\end{equation}
which quantifies the growth of the product ${\mathcal U}_n$ 
and is independent of the choice of matrix norm $| \cdot |$, may be shown to be {\em strictly positive}.
These results are of great relevance to our problem for two reasons: firstly,
any invariant measure $\tilde{\nu}_{\tilde{\mu}}$
yields a measure $\nu_{\mu}$ that is stationary for the Markov chain  $\hat{\mathbf Z}$, and
vice-versa; secondly,
$$
\gamma_{\tilde{\mu}} = \Real \,\left [ \Lambda_{\mu} (t)\right ]\,.
$$
Hence we deduce at once the uniqueness of the measure $\nu_{\mu}$, as well as the exponential
growth of the $u_n$: 

\begin{proposition}
Let the $s_n$ be independent 
random variables in ${\mathbb R}_+$ with a common distribution $\mu$
that has at least two points of increase. 
Let $\{u_n\}_{n \in {\mathbb N}}$ be the sequence
defined by the recurrence (\ref{differenceEquation}).
Suppose that
$$
\int_{{\mathbb R}_+} s^\varepsilon \,\mu(\rd s) < \infty
$$
for some $\varepsilon > 0$.
Then, for almost every realisation of the
sequence ${\mathbf s}$, the following holds independently of the starting value $u_0 \ne 0$:
For Lebesgue-almost every $t \in {\mathbb C} \backslash {\mathbb R}_-$,
$$
\lim_{n \rightarrow \infty} \frac{\ln u_n}{n} = \Lambda_{\mu} (t) \quad \text{and} \quad
\Real \left [ \Lambda_{\mu} (t) \right ] > 0\,.
$$
Furthermore, if
we set $t=-x+\ri 0\pm$, then for Lebesgue-almost every $x>0$, 
$$
\lim_{n \rightarrow \infty} \frac{\ln u_n}{n} = \Lambda_{\mu} (-x+\ri 0\pm) \quad \text{and} \quad
\Real \left [ \Lambda_{\mu} (-x+\ri 0 \pm ) \right ] > 0\,.
$$
\label{furstenbergProposition}
\end{proposition}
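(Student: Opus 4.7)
My plan is to deduce the proposition from Furstenberg's theorem applied to the random product $\mathcal{U}_n$ of (\ref{productOfMatrices}), exploiting the matrix identity $\binom{u_{n-1}}{u_n} = \mathcal{U}_n \binom{u_{-1}}{u_0}$ that is immediate from (\ref{differenceEquation}) and (\ref{definitionOfA}).

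First, I would verify the integrability hypothesis required by Furstenberg--Kesten: since $\ln^+ x \leq x^\varepsilon/\varepsilon$ for $x \geq 1$, the moment assumption $\int s^\varepsilon \mu(\rd s) < \infty$ forces $\mathbb{E}[\ln^+ \|\mathcal{A}_1\|] < \infty$ for every fixed $t \in \mathbb{C}\setminus\mathbb{R}_-$. The Furstenberg--Kesten theorem then yields the existence of the real Lyapunov exponent $\gamma_{\tilde\mu}(t) = \lim_n n^{-1} \ln\|\mathcal{U}_n\|$ almost surely. To obtain strict positivity $\gamma_{\tilde\mu}(t) > 0$, I would invoke Furstenberg's theorem, verifying its two standing hypotheses: (i) that the closed subgroup generated by $\mathrm{supp}(\tilde\mu)$ is not relatively compact in $GL_2(\mathbb{C})$, which is transparent since the $(2,2)$-entry $s/\sqrt{t}$ is unbounded under $\mu$; and (ii) strong irreducibility, which uses the hypothesis that $\mu$ has at least two points of increase $s_a \neq s_b$---the matrices $\mathcal{A}(s_a)$ and $\mathcal{A}(s_b)$ then generate a subgroup whose action on $\mathbb{P}(\mathbb{C}^2)$ leaves no finite union of lines invariant, by a short case analysis comparing their eigenvectors.

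Next, I would deduce the analogous statement for the complex Lyapunov exponent. Furstenberg's theorem furthermore guarantees a unique $\tilde\mu$-invariant measure on $\mathbb{P}(\mathbb{C}^2)$, which projects to the unique stationary distribution $\nu_\mu$ of the Markov chain $\hat{\mathbf Z}$; this uniqueness absorbs any dependence on starting values. The ergodic theorem quoted at (\ref{markovChainTheorem}) then gives, for every initial $u_0 \neq 0$,
$$
\frac{\ln u_n}{n} = \frac{\ln u_0}{n} - \frac{1}{n} \sum_{j=1}^n \ln \hat{Z}_j \xrightarrow[n\to\infty]{} \Lambda_\mu(t) \quad \text{almost surely},
$$
the telescoping sum coming from $\hat{Z}_j = u_{j-1}/u_j$. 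Taking real parts, together with boundedness of $\hat Z_n$ in distribution (which ensures $|u_n| \sim \|\mathcal{U}_n(u_{-1},u_0)^T\|$), identifies $\Real[\Lambda_\mu(t)]$ with $\gamma_{\tilde\mu}(t) > 0$.

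Finally, to upgrade the ``for each fixed $t$, almost surely'' statement to ``almost surely, for Lebesgue-a.e.\ $t$,'' I would apply Fubini to the exceptional set $E \subset \Omega \times (\mathbb{C}\setminus\mathbb{R}_-)$: each $t$-slice of $E$ has $\mathbb{P}$-measure zero, so almost every $\omega$-slice has Lebesgue measure zero. The boundary case $t = -x + \ri 0\pm$ is handled in the same way, parametrized by $x \in \mathbb{R}_+$; the matrices $\mathcal{A}_n$ then have entries $\mp\ri s_n/\sqrt{x}$ and still satisfy the hypotheses of Furstenberg's theorem, with the Fubini argument now carried out in $\Omega \times \mathbb{R}_+$. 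The principal technical obstacle I foresee is a clean verification of strong irreducibility for the specific family $\{\mathcal{A}(s) : s \in \mathrm{supp}(\mu)\}$; once that is settled, the rest is bookkeeping around the classical results.
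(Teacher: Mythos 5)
Your route is the paper's route: check the $\log^{+}$-moment condition from $\int s^{\varepsilon}\mu(\rd s)<\infty$, apply Furstenberg's theorem (theorem \ref{furstenbergTheorem}) to the transfer matrices (\ref{definitionOfA}) to get uniqueness of the stationary law of $\hat{\mathbf Z}$ and positivity of $\gamma_{\tilde{\mu}}$, identify $\Real[\Lambda_{\mu}(t)]$ with $\gamma_{\tilde{\mu}}$ through the stationary measure, pass from the Markov-chain law of large numbers to $\ln u_n/n\to\Lambda_{\mu}(t)$ for each fixed $t$ and each $u_0\ne 0$, treat $t=-x+\ri 0\pm$ separately, and finish with Fubini; this is exactly the chain of corollaries \ref{furstenbergCorollary}--\ref{positivityCorollary} of Appendix \ref{furstenbergAppendix} (the paper verifies the Furstenberg hypothesis by citing Carmona \& Lacroix, corollary IV.4.26, and in the boundary case also conjugates to real Schr\"odinger matrices, which is not needed for the limit statement itself).

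The one step that fails as written is your non-compactness argument: you claim the entry $s/\sqrt{t}$ is ``unbounded under $\mu$'', but the proposition only assumes that $\mu$ has two points of increase, so $\mu$ may well have bounded support (a two-point law is allowed), and then this argument gives nothing. The cheap repair also settles the strong irreducibility you deferred: if $s_a\ne s_b$ lie in the support of $\mu$, the generated group contains $\mathcal{A}(s_a)\mathcal{A}(s_b)^{-1}=\bigl(\begin{smallmatrix}1&0\\ (s_a-s_b)/\sqrt{t}&1\end{smallmatrix}\bigr)$, a nontrivial unipotent element. Its powers are unbounded, so the group is not relatively compact; moreover it has a unique fixed point $z=0$ in $P({\mathbb C}^2)$ and every orbit of its iterates converges to that point, so any invariant probability measure on $P({\mathbb C}^2)$ (and any finite invariant union of lines) would have to be concentrated at $z=0$, which $\mathcal{A}(s_a)$ sends to $\sqrt{t}/s_a\ne 0$. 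Hence no invariant measure on $P({\mathbb C}^2)$ exists, which is the hypothesis of theorem \ref{furstenbergTheorem} in the form the paper uses; the same computation with $\mp\ri(s_a-s_b)/\sqrt{x}$ covers the boundary case. With this substituted for the unboundedness claim, the rest of your outline (the identification of $\Real[\Lambda_{\mu}]$ with $\gamma_{\tilde{\mu}}$, the independence of the starting value $u_0\ne 0$, and the Fubini step) matches the paper's proof.
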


The proof of this proposition is provided in Appendix
\ref{furstenbergAppendix}.

\subsection{The gamma distribution: statement of the main result}
\label{resultsSubsection}
By using such machinery, we are able, for a very wide class of distributions $\mu$,
to deduce the almost-sure exponential nature of the convergence of diagonal Pad\'{e}
approximation and also to deduce
the almost-sure singularity of the measure $\sigma$. A more quantitative study
requires the calculation of the complex Lyapunov exponent,
but there are very few known instances where it can be expressed
in terms of familiar functions. 

In his seminal paper on the disordered chain (\ref{dysonChain}), Dyson studied in some detail the
particular case where the $c_n$ are independent and gamma-distributed. Dyson found
the invariant distribution of the continued fraction 
\begin{equation}
\notag
\cfrac{c_0 t}{1+ \cfrac{c_1 t}{1+\cfrac{c_2 t}{1+\cdots}}}
\end{equation}
in the particular case where $t>0$; 
he then used analytic continuation to obtain an expansion for the 
complex Lyapunov exponent at $t<0$, and hence for the distribution of the 
characteristic frequencies. 
Dyson's continued 
fraction is not equivalent to ours (compare equations (\ref{dysonChain}) and (\ref{waveEquation})), and so his analytical results do not transfer
to our problem. However, 
the continued fraction (\ref{complexContinuedFraction}) 
with independent, gamma-distributed $s_n$, i.e. where
\begin{equation}
\mu (\rd x) := \frac{1}{b^a \,\Gamma(a)} \,x^{a-1} \exp(-x/b) \,\rd x\,, \quad a,\,b >0\,.
\label{gammaDistribution}
\end{equation}
was examined also by Letac and Seshadri (1983); they obtained
the probability distribution $\nu_\mu$ 
and found an explicit formula for the corresponding
Lyapunov exponent for $t>0$. In a recent paper, we generalised this result by finding
$\nu_{\mu}$ and the real part of the complex Lyapunov exponent
for {\em every} complex $t$ (see Marklof {\em et al.} 2005); a straightforward extension of these calculations leads to
the remarkably simple formula
\begin{equation}
 \Lambda_{\mu}(t)
= 
\partial_{a} \ln \left [ K_{a}\left(\frac{2\sqrt{t}}{b} \right) \right ]\,.
\label{characteristicExponentFormula}
\end{equation}
In this expression, 
$\partial_a$ denotes differentiation with respect to $a$, and $K_a$ is the modified Bessel function
of the second kind (see Appendix \ref{characteristicExponentAppendix}). The following summarises the key results
of this paper.
\begin{theorem}
Suppose that the $s_n$ are independent draws from
the gamma distribution with parameters $a>0$ and $b>0$. Then,
for
almost every realisation of the sequence ${\mathbf s}$, the following holds:
\begin{enumerate}
\item The density of states is given explicitly by the formula
$$
\varrho (\lambda) = -\frac{2}{\pi^2 \lambda} \,\partial_a \left [ 
\frac{1}{J_a^2 \left ( \frac{2}{b \sqrt{\lambda}} \right ) + Y_a^2 \left ( \frac{2}{b \sqrt{\lambda}} \right )} \right ]\,.
$$
\item 
$\text{\em spec}(\sigma) = [0,\,\infty)$ and its absolutely continuous part is empty.

\item For Lebesgue-almost every $t \in {\mathbb C} \backslash {\mathbb R}_-$,
$$
\lim_{n \rightarrow \infty} \frac{\ln \left | S(t)-S_n(t) \right |}{n} = 
-2 \,\partial_a \ln \left | K_a \left ( \frac{2 \sqrt{t}}{b} \right ) \right |
\,.
$$
\end{enumerate}
\label{mainTheorem}
\end{theorem}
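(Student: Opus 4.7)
The plan is to deduce all three assertions of Theorem~\ref{mainTheorem} from the explicit formula \eqref{characteristicExponentFormula}, $\Lambda_{\mu}(t)=\partial_a\ln K_a(2\sqrt{t}/b)$. Parts (3) and (1) reduce to direct calculations built on this formula; Part (2) combines Part (1) with standard spectral-theoretic input.

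For Part (3), I would use the classical determinant identity $P_{n+1}Q_n-P_nQ_{n+1}=(-1)^{n+1}t^n$, which follows by induction from the three-term recurrence \eqref{threeTermRecurrence} and the initial conditions for $P_n$ and $Q_n$. Telescoping $S_{k+1}(t)-S_k(t)$ then yields
\begin{equation*}
S(t)-S_n(t)=\sum_{k=n}^{\infty}\frac{(-t)^k}{Q_k(t)\,Q_{k+1}(t)}.
\end{equation*}
By Lemma~\ref{denominatorLemma}, $|Q_n(t)|=|t|^{n/2}\,|u_n|$, and by Proposition~\ref{furstenbergProposition}, $n^{-1}\ln|u_n|\to\Real[\Lambda_{\mu}(t)]>0$ almost surely for Lebesgue-a.e.\ $t\in\mathbb{C}\backslash\mathbb{R}_-$. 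The $k$-th term in the sum therefore decays geometrically in $k$, and the tail is negligible on the logarithmic scale compared with the first term. Dividing by $n$ and passing to the limit gives $-2\,\Real[\Lambda_{\mu}(t)]$, which by \eqref{characteristicExponentFormula} equals $-2\,\partial_a\ln|K_a(2\sqrt{t}/b)|$.

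For Part (1), I would start from the identity $N(\lambda)=-(2/\pi)\,\Imag[\Lambda_{\mu}(-1/\lambda+\ri 0+)]$ recorded at the end of Section~\ref{lyapunovSubsection}. Setting $t=-1/\lambda+\ri 0+$ with $\lambda>0$ yields $2\sqrt{t}/b=\ri z$ for $z:=2/(b\sqrt{\lambda})$, and the classical connection formula
\begin{equation*}
K_a(\ri z)=-\tfrac{\pi\ri}{2}\,e^{-\ri\pi a/2}\,[J_a(z)-\ri Y_a(z)]
\end{equation*}
recasts $\Lambda_{\mu}$ as $-\ri\pi/2+\partial_a\ln[J_a(z)-\ri Y_a(z)]$. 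Rationalising the denominator and reading off the imaginary part of $\partial_a\ln[J_a-\ri Y_a]$ yields a closed expression for $N(\lambda)$. Differentiating in $\lambda$ (using $\rd z/\rd\lambda=-z/(2\lambda)$) and collapsing the resulting $\partial_z$ terms by means of the Bessel Wronskian $J_a Y_a'-Y_a J_a'=2/(\pi z)$ should reduce the answer to the compact form $\varrho(\lambda)=-(2/(\pi^2\lambda))\,\partial_a[1/(J_a^2(z)+Y_a^2(z))]$.

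For Part (2), the assertion $\text{spec}(\sigma)=[0,\infty)$ follows from the standard identification of the almost-sure essential spectrum of the stationary Jacobi operator $\mathscr{J}$ with $\text{supp}(\kappa)$, combined with the non-negativity $\text{spec}(\mathscr{J})\subseteq[0,\infty)$ inherited from Krein's string interpretation (Section~\ref{KreinSubsection}) and direct verification from Part (1) that $\varrho(\lambda)>0$ for every $\lambda>0$. The absence of an absolutely continuous component of $\sigma$ follows from the Ishii--Pastur theorem: Proposition~\ref{furstenbergProposition} supplies $\Real[\Lambda_{\mu}(-x+\ri 0\pm)]>0$ for Lebesgue-a.e.\ $x>0$, and positivity of the Lyapunov exponent almost everywhere on the spectrum excludes an absolutely continuous part. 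The main technical obstacle I anticipate is the algebraic reduction in Part (1): threading the connection formula, the differentiation in both $a$ and $\lambda$, and the Bessel Wronskian together to arrive at the stated compact form requires care, although each individual step is elementary.
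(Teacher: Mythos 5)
Your treatment of Parts (1) and (2) tracks the paper's route closely: Part (1) starts from the Dyson relation $N(\lambda)=-\tfrac{2}{\pi}\Imag\bigl[\Lambda_{\mu}(-1/\lambda+\ri 0+)\bigr]$ (established in the paper as Theorem~\ref{dysonTheorem}), plugs in the connection formula for $K_a(\ri z)$, and uses the Wronskian $J_aY_a'-Y_aJ_a'=2/(\pi z)$ to collapse the $\lambda$-derivative; this is exactly Corollary~\ref{dysonCorollary}. For Part (2), the paper similarly combines positivity of the density of states with a singularity argument. Two points, however, deserve attention.

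First, in Part (2) you assert that $\varrho(\lambda)>0$ follows ``by direct verification from Part (1).'' This is not direct: it amounts to proving $\partial_a\bigl[J_a^2(z)+Y_a^2(z)\bigr]>0$, a nontrivial monotonicity statement about Bessel functions. The paper (Corollary~\ref{densityOfStatesCorollary}) establishes it by differentiating Watson's integral representation
$J_a^2(z)+Y_a^2(z)=\tfrac{8}{\pi^2}\int_0^\infty K_0(2z\sinh t)\cosh(2at)\,\rd t$
with respect to $a$; some such input is required and your outline omits it. Your appeal to Ishii--Pastur for the absence of an a.c.~component is reasonable in spirit; the paper instead proves this directly via the Men'shov--Rademacher theorem (Proposition~\ref{essentialSpectrumProposition}), partly to stay self-contained in the semi-infinite Jacobi setting, but the two routes are morally equivalent.

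Second, and more substantively, your Part (3) takes a genuinely different route from the paper, and as written it has a gap. You telescope the determinant identity to obtain $S(t)-S_n(t)=\sum_{k\ge n}(-t)^{k+1}/(Q_kQ_{k+1})$ and then claim that the tail is ``negligible on the logarithmic scale compared with the first term.'' For the \emph{upper} bound this is fine: $|u_k|\sim e^{k\gamma}$ with $\gamma=\Real\Lambda_{\mu}(t)>0$ gives $\bigl|S-S_n\bigr|\le C\,e^{-2n(\gamma-\epsilon)}$. But for the \emph{lower} bound you need the first term to actually dominate, and for complex $t$ the summands carry fluctuating phases, so you cannot conclude $\bigl|\sum_{k\ge n}a_k\bigr|\gtrsim|a_n|$ from the triangle inequality alone; when $\gamma$ is small the crude estimate $|a_n|-\sum_{k>n}|a_k|$ can even be negative. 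The paper avoids this entirely by deriving an \emph{exact} single-term representation for the error (Lemma~\ref{errorLemma}),
$Z(\mathbf s)-Z_n(\mathbf s)=\frac{(-1)^n}{u_n}\prod_{j=0}^n Z(\mathscr{T}^j\mathbf s)$,
so that the logarithm splits cleanly into $-\ln|u_n|/n$ plus an ergodic average, each with an almost-sure limit; no cancellation issue arises. If you want to keep the telescoping decomposition you would need to supply an additional argument showing that the first term is not cancelled --- for instance by reverting to an exact tail formula $S-S_n=\pm(-t)^n/\bigl[Q_n(Q_n\hat{S}_n+tQ_{n-1})\bigr]$, which is in substance what the paper does.
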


\subsection{Relation to other work}

As stated earlier, our focus in the present paper
is on the performance of diagonal Pad\'{e} approximation, viewed as a method
of summing some random series. Related questions have been considered in the past, in different contexts. 
Foster \& Pitcher (1974) study the convergence of random 
$T$-fractions; these are continued fraction expansions which are in a one-to-one
correspondence with the space of formal power series, but whose convergents are
not Pad\'{e} approximants. Foster \& Pitcher (1974) show that, under very
general conditions on the distribution of the coefficients, the difference between
two successive convergents tends to zero exponentially fast, and that the exponent
is twice the Lyapunov exponent associated with an infinite product of random matrices.
Geronimo (1993) studies the random measure (on the unit circle) generated by a three-term
recurrence relation with random, identically distributed coefficients;
he shows the positivity of the corresponding Lyapunov exponent and deduces
that the random measure is singular with respect to the Lebesgue measure. This list is not
exhaustive; see also Csordas {\em et al.} (1973) and Mannion (1993).

The question of the nature of the measure $\sigma$ has a counterpart in the
theory of disordered systems which has been studied extensively in the context
of Anderson localisation. For example,
the tight-binding Anderson model uses a discretised version of the Schr\"{o}dinger equation 
with a potential that takes random, identically-distributed values at every point
in a doubly-infinite lattice. The resulting operator 
has a
second-order finite-difference form like that
of the operator ${\mathscr J}$---in which, more precisely, $h_n = 1$ and the $v_n$
are independent and identically distributed--- but it acts on sequences
in $\ell^2(\mathbb Z)$.
For a very wide choice of the distribution of the potential values $v_n$, the Lyapunov
exponent of the discretised Schr\"{o}dinger operator
is strictly positive, so that, by Ishii's formula,
the absolutely continuous spectrum is empty. A more refined study (see
for instance Carmona \& Lacroix 1990 and Pastur \& Figotin 1992) reveals that
these operators have a {\em pure point} spectrum
--- that is, the spectrum is the closure of the discrete spectrum. Such
operators are said to exhibit the {\em localisation property} because, for this type of spectrum,
the generalised eigenfunctions decay exponentially fast as $|n| \rightarrow \infty$.
The rigorous extension of such detailed results to the semi-infinite case
would involve technicalities which are outside the scope of the present paper.

Our analysis exploits a number of ideas and techniques found in these earlier studies.
{\em We view our main contribution as that of exhibiting an interesting example of a class of random
Stieltjes functions for which the leading behaviour of the error of diagonal Pad\'{e} approximation,
and the density of states of the corresponding Jacobi operator,
are given explicitly in terms of special functions.}

The remainder of the paper is devoted to a detailed proof of theorem \ref{mainTheorem}: the first statement follows immediately from Dyson's formula for the density of states, which is
derived in  
\S \ref{densityOfStatesSection}. In \S \ref{singularitySection}, we deduce  
the singularity of the measure $\sigma$ from the positivity of the real part of the Lyapunov exponent.
In \S \ref{convergenceSection}, we show that the error of diagonal Pad\'{e} approximation is
inversely proportional to the square of the $u_n$; this yields
the third statement in the theorem.
Finally, in \S \ref{numericalSection}, we provide a
numerical illustration of our results.

\section{The formula for the density of states}
\label{densityOfStatesSection}

So that we can use proposition \ref{furstenbergProposition},
we shall henceforth suppose that
the $s_n$ are independent draws from a distribution $\mu$ on ${\mathbb R}_+$ such that
\begin{enumerate}
\item $\mu$ has at least two points of increase.
\item There exists $\varepsilon > 0$ such that
$$
\int_{{\mathbb R}_+} s^\varepsilon \mu (\rd s) < \infty\,.
$$
\end{enumerate}
To avoid needless repetitions, we shall not mention these particular
assumptions explicitly
again in the statement of the intermediate results leading to
theorem \ref{dysonTheorem}. 
We begin by relating the growth of the $\psi_n$ to the complex Lyapunov exponent.

\begin{lemma}
For almost every realisation of the sequence ${\mathbf s}$, we have,  for Lebesgue-almost every $\lambda \in {\mathbb C} \backslash {\mathbb R}_+$,
$$
\lim_{n \rightarrow \infty} \frac{\ln \psi_n (\lambda)}{n} = \ri \pi + 2 \Lambda_{\mu} \left ( -1/\lambda \right )
$$
and, for Lebesgue-almost every $\lambda \in {\mathbb R}_+$,
$$
\lim_{n \rightarrow \infty} \frac{\ln \psi_n (\lambda + \ri 0 \pm )}{n} = 
\ri \pi + 2 \Lambda_{\mu} \left ( -1/\lambda + \ri 0 \pm \right ) \,.
$$
\label{asymptoticsOfPsiLemma}
\end{lemma}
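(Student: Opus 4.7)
The plan is to combine the algebraic identity relating $\psi_n$ to $u_n$ (via Lemma \ref{denominatorLemma}) with the exponential growth rate furnished by Proposition \ref{furstenbergProposition}, and to dispose of the remaining normalisation prefactor $\sqrt{s_{2n+1}}$ by a Borel--Cantelli argument.

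Starting from the definition (\ref{psiDefinition}), I would apply Lemma \ref{denominatorLemma} with the parameter $t=-1/\lambda$, noting that the condition $\lambda\in\mathbb{C}\setminus\mathbb{R}_+$ is precisely what is needed for $-1/\lambda\in\mathbb{C}\setminus\mathbb{R}_-$. Since $(\sqrt{-1/\lambda})^{2n}=(-1/\lambda)^n$, one gets $Q_{2n}(-1/\lambda)=(-1/\lambda)^n u_{2n}$, the factor $\lambda^n$ in (\ref{psiDefinition}) cancels, and one is left with the clean identity
\begin{equation*}
\psi_n(\lambda)\;=\;(-1)^n\sqrt{s_{2n+1}}\,u_{2n}.
\end{equation*}
Taking logarithms and dividing by $n$ gives
\begin{equation*}
\frac{\ln\psi_n(\lambda)}{n}\;=\;\ri\pi+\frac{\ln s_{2n+1}}{2n}+\frac{\ln u_{2n}}{n},
\end{equation*}
and Proposition \ref{furstenbergProposition} at $t=-1/\lambda$ then supplies $\ln u_{2n}/n\to 2\Lambda_\mu(-1/\lambda)$ for almost every realisation of $\mathbf s$ and Lebesgue-almost every such $\lambda$. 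The boundary-value statement is obtained from the same computation, using the second half of Proposition \ref{furstenbergProposition} together with the observation that $\lambda\mapsto-1/\lambda$ sends $\lambda+\ri 0\pm$ (with $\lambda>0$) to $-1/\lambda+\ri 0\pm$, preserving the sign of the approach.

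What remains is to show $\ln s_{2n+1}/n\to 0$ almost surely. For the upper tail, Markov's inequality combined with the $\varepsilon$-moment hypothesis gives $\mathbb{P}(s_n>\re^{\delta n})\le\re^{-\varepsilon\delta n}\int s^\varepsilon\,\mu(\rd s)$, which is summable in $n$, so Borel--Cantelli along a countable sequence $\delta_k\downarrow 0$ yields $\limsup_n\ln s_n/n\le 0$. The matching lower bound $\liminf_n\ln s_n/n\ge 0$ requires summability of $\mathbb{P}(s_n<\re^{-\delta n})$, which holds under any mild regularity of $\mu$ at the origin (for the gamma distribution it follows trivially from $F(x)\sim x^a/(a\,\Gamma(a)\,b^a)$ as $x\to 0+$). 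I expect this two-sided control of $\ln s_n$ on the exponential scale to be the only non-routine ingredient; everything else is bookkeeping between Lemma \ref{denominatorLemma} and Proposition \ref{furstenbergProposition}.
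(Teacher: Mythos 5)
Your argument is essentially the paper's own proof: the same identity $\psi_n(\lambda)=(-1)^n\sqrt{s_{2n+1}}\,u_{2n}$ obtained from Lemma \ref{denominatorLemma} at $t=-1/\lambda$, followed by taking logarithms and invoking the almost-sure growth rate of $u_{2n}$ from the Furstenberg machinery (the paper cites Corollary \ref{markovChainCorollary}; your use of Proposition \ref{furstenbergProposition} is the same input, with the boundary case handled identically). Your explicit Borel--Cantelli treatment of the $\ln s_{2n+1}/(2n)$ term is a detail the paper passes over in silence; the lower-tail control you flag is, in the paper's general setting, supplied by the hypothesis $\int|\ln s|\,\mu(\rd s)<\infty$ imposed in the subsequent results, and is immediate for the gamma distribution, so nothing is amiss.
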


\begin{proof}
Let $\lambda \in {\mathbb C} \backslash {\mathbb R}_+$. By definition,
$$
\psi_n (\lambda ) = \sqrt{s_{2n+1}} \,\lambda^n Q_{2n} (-1/\lambda) \,.
$$
Hence, by lemma \ref{denominatorLemma},
\begin{equation}
\notag
\psi_n (\lambda) = \sqrt{s_{2n+1}}\, \lambda^n \left ( \sqrt{-1/\lambda}\right )^{2n} u_{2n} \\
= (-1)^n \sqrt{s_{2n+1}}\, u_{2 n}\,, 
\end{equation}
where $u_n$ solves the difference equation (\ref{differenceEquation}) with $t=-1/\lambda$.
This yields
$$
\frac{\ln \psi_n (\lambda)}{n} = \frac{\ln s_{2n+1}}{2n}  + \ri \pi + 2 \frac{\ln u_{2n}}{2 n}
\,.
$$
The first statement in the proposition then follows from corollary \ref{markovChainCorollary}.  
The proof of the second statement is identical.
\end{proof}

Next, we examine the implications of the lemma for the
distribution of the $\lambda_{n,j}$. 
By virtue of the recurrence relation satisfied by the $\psi_n$, we can write
$$
\psi_n (\lambda) =  \sqrt{s_{2n+1}} \left ( \prod_{j=1}^{2n} s_j \right )\,( \lambda-\lambda_{n,1}) \cdots ( \lambda-\lambda_{n,n})\,.
$$
Let
$$
{E}_n := \left \{ \lambda_{n,j} : 1 \le j \le n \right \}
$$
and let $\lambda \notin {E}_n$.
Then
\begin{equation}
\frac{\ln \psi_n (\lambda)}{n} = \frac{\ln s_{2n+1}}{2n} + 
\frac{1}{n} \sum_{j=1}^{2n} \ln s_j + \frac{1}{n} \sum_{j=1}^n \ln (\lambda-\lambda_{n,j}) \,.
\label{logOfPsi}
\end{equation}

\begin{proposition}
Suppose that 
$$
\int_{{\mathbb R}_+} \left | \ln s \right |\,\mu (\rd s) < \infty\,.
$$
Then,
for almost every realisation of the sequence $\mathbf s$, for Lebesgue-almost every
$\lambda \in {\mathbb C} \backslash {\mathbb R}_+$,
$$ 
\int_0^\infty \ln | \lambda - \lambda' | \,\kappa_n ( \rd \lambda')
\xrightarrow[n \rightarrow \infty]{} 
2\,\Real \left [ \Lambda_{\mu} \left ( -1/\lambda \right ) \right ]
- 2 \int_{{\mathbb R}_+} \,\ln s \,\mu (\rd s)\,.
$$
\label{goldsheidKhoruzhenkoProposition}
\end{proposition}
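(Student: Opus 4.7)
The plan is to take real parts of equation (\ref{logOfPsi}) and pass to the limit termwise. Since each $s_j > 0$, the first two terms on the right-hand side of (\ref{logOfPsi}) are already real, and the third term contributes $\int_0^\infty \ln|\lambda-\lambda'|\,\kappa_n(\rd\lambda')$. Using $\Real\ln\psi_n(\lambda) = \ln|\psi_n(\lambda)|$, equation (\ref{logOfPsi}) rearranges to
\begin{equation*}
\int_0^\infty \ln|\lambda-\lambda'|\,\kappa_n(\rd\lambda')
= \frac{\ln|\psi_n(\lambda)|}{n} - \frac{\ln s_{2n+1}}{2n} - \frac{1}{n}\sum_{j=1}^{2n}\ln s_j,
\end{equation*}
valid for any $\lambda \notin E_n$. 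Since $\bigcup_n E_n \subset \mathbb{R}_+$ is at most countable, this identity holds for Lebesgue-a.e.\ $\lambda \in \mathbb{C}\backslash\mathbb{R}_+$ and every $n$.

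Three limit arguments then complete the proof. First, lemma \ref{asymptoticsOfPsiLemma} (taking real parts, so that the $\ri\pi$ term vanishes) gives $\ln|\psi_n(\lambda)|/n \to 2\Real[\Lambda_\mu(-1/\lambda)]$ for a.e.\ realisation and Lebesgue-a.e.\ $\lambda \in \mathbb{C}\backslash\mathbb{R}_+$. Second, because $\ln s_1,\ln s_2,\ldots$ are i.i.d.\ with ${\mathbb E}|\ln s_1|<\infty$, a standard Borel--Cantelli argument (summability of $\sum_n P(|\ln s_1|>n\varepsilon)$) yields $\ln s_n/n \to 0$ almost surely, so that $\ln s_{2n+1}/(2n) \to 0$. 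Third, Kolmogorov's strong law of large numbers gives $\tfrac{1}{2n}\sum_{j=1}^{2n}\ln s_j \to \int \ln s\,\mu(\rd s)$ almost surely, and hence $\tfrac{1}{n}\sum_{j=1}^{2n}\ln s_j \to 2\int\ln s\,\mu(\rd s)$. Substituting these three limits into the display above and rearranging yields the stated asymptotic.

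The only noteworthy point is that the hypothesis $\int |\ln s|\,\mu(\rd s)<\infty$ is used twice---once for the SLLN, and once for the Borel--Cantelli step controlling the isolated $\ln s_{2n+1}$ contribution---and is a genuine additional assumption beyond the $\varepsilon$-moment condition of proposition \ref{furstenbergProposition}. The combined quantifier structure ``a.s.\ in the realisation, Lebesgue-a.e.\ in $\lambda$'' is exactly that of lemma \ref{asymptoticsOfPsiLemma}, so no Fubini argument is needed to transport it across. There is no genuine obstacle in this proposition: the hard analytic work identifying the limit of $\ln|\psi_n(\lambda)|/n$ was already carried out in lemma \ref{asymptoticsOfPsiLemma} via proposition \ref{furstenbergProposition}, and the remaining inputs are routine facts about sums of i.i.d.\ random variables.
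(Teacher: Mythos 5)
Your proof is correct and follows essentially the same route as the paper: take real parts of equation (\ref{logOfPsi}), rearrange, and compute the almost-sure limit of each of the three terms via lemma \ref{asymptoticsOfPsiLemma}, the observation that $\ln s_{2n+1}/(2n)\to 0$ a.s., and the strong law of large numbers (the paper invokes the ergodic theorem, which reduces to Kolmogorov's SLLN in the i.i.d.\ case). Your version merely spells out the Borel--Cantelli justification for the $\ln s_{2n+1}/(2n)$ term and notes, correctly if slightly overcautiously, that the identity holds for all $\lambda\in\mathbb{C}\backslash\mathbb{R}_+$ since $E_n\subset\mathbb{R}_+$.
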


\begin{proof}
Take the real part in equation (\ref{logOfPsi}). Then 
\begin{multline}
\notag
\int_0^\infty \ln | \lambda - \lambda' | \,\kappa_n ( \rd \lambda') = 
\frac{1}{n} \sum_{j=1}^n \ln | \lambda-\lambda_{n,j} | = 
\frac{\ln | \psi_n (\lambda)|}{n} \\
 - \frac{\ln s_{2n+1}}{2 n} - \frac{1}{n} \sum_{j=1}^{2n} \ln s_j \,.
\end{multline}
By lemma \ref{asymptoticsOfPsiLemma}, the first term on the right tends to
$$
2\,\Real \left [ \Lambda_{\mu} \left ( -1/\lambda \right ) \right ]\,,
$$
the second term tends to zero and,
by the ergodic theorem,  the third term 
tends to
$$
2 \int_{{\mathbb R}_+} \,\ln s \,\mu (\rd s)\,.
$$
\end{proof}

\begin{corollary}
Under the same assumption, for almost every realisation of the
sequence $\mathbf s$, the sequence $\{ \kappa_n \}_{n \in {\mathbb N}}$
has a weak limit, say $\kappa$, which is a probability measure on ${\mathbb R}_+$. In particular, 
$$
\lim_{n \rightarrow \infty} N_n(\lambda) = N (\lambda) := \int_0^\lambda \kappa (\rd \lambda')\,.
$$
\label{goldsheidKhoruzhenkoCorollary}
\end{corollary}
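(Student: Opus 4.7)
The plan is to deduce weak convergence of $\kappa_n$ from the Lebesgue-almost-everywhere convergence of their logarithmic potentials provided by proposition \ref{goldsheidKhoruzhenkoProposition}, using the classical fact that a finite Borel measure on $\mathbb{R}$ is determined uniquely by its logarithmic potential (the distributional Laplacian on $\mathbb{C}$ of $z\mapsto \ln|z|$ equals $2\pi$ times the Dirac mass at the origin, so the measure can be read off from its potential).

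First, viewing $\{\kappa_n\}$ as probability measures on the one-point compactification $[0,\infty]$, a standard compactness argument shows that any subsequence has a further subsequence $\kappa_{n_k}$ converging weakly to some probability measure $\kappa^\ast$ on $[0,\infty]$. For each fixed $\lambda \in \mathbb{C}\setminus[0,\infty)$, I would then try to pass to the limit in proposition \ref{goldsheidKhoruzhenkoProposition} to obtain
$$
\int_{[0,\infty)}\ln|\lambda-\lambda'|\,\kappa^\ast(\rd\lambda') = 2\,\Real\left[\Lambda_{\mu}(-1/\lambda)\right]-2\int_{\mathbb{R}_+}\ln s\,\mu(\rd s).
$$
This step requires a uniform-integrability argument for $\lambda'\mapsto \ln|\lambda-\lambda'|$ against $\{\kappa_n\}$, since the integrand is continuous on $[0,\infty)$ but grows like $\ln\lambda'$ at infinity. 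A uniform-in-$n$ control on $\int \ln(1+\lambda')\,\kappa_n(\rd\lambda')$ can be extracted from proposition \ref{goldsheidKhoruzhenkoProposition} itself along $|\lambda|\to\infty$, via the decomposition $\ln|\lambda-\lambda'|=\ln|\lambda|+\ln|1-\lambda'/\lambda|$ together with the asymptotic behaviour of $\Lambda_{\mu}(-1/\lambda)$ for large $|\lambda|$, which must grow like $\tfrac12\ln|\lambda|$ (consistent with the unit total mass of each $\kappa_n$). The same asymptotic comparison forces $\kappa^\ast([0,\infty))=1$, ruling out escape of mass to infinity.

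Finally, any two finite measures on $\mathbb{R}$ whose logarithmic potentials coincide Lebesgue-a.e.\ on $\mathbb{C}\setminus\mathbb{R}$ must be equal, by taking the distributional Laplacian. Hence every vague subsequential limit equals one and the same probability measure $\kappa$, so $\kappa_n\to\kappa$ weakly; the almost-sure convergence $N_n(\lambda)\to N(\lambda)$ at continuity points of $N$ is the Portmanteau consequence for distribution functions of probability measures on $[0,\infty)$. The main obstacle is the uniform-integrability step: the integrand $\ln|\lambda-\lambda'|$ is unbounded at infinity while the supports of the $\kappa_n$ are not uniformly bounded --- the entries of $\mathscr{J}_n$ involve the possibly singular terms $1/s_j$ --- so passing to the limit under the integral and simultaneously ruling out loss of mass to infinity, using only the hypothesis $\int|\ln s|\,\mu(\rd s)<\infty$, is the delicate technical point.
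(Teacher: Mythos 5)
Your proposal re-derives from first principles the potential-theoretic result that the paper invokes as a black box, namely Proposition 1.3 of Goldsheid \& Khoruzhenko (2005), stated as Proposition~\ref{GKProposition13} in Appendix~\ref{goldsheidAppendix}. That proposition takes as input (i) Lebesgue-a.e.\ convergence of the logarithmic potentials $p_n(z)=\int\ln|z-w|\,\kappa_n(\rd w)$ and (ii) finiteness of $\tau_1:=\limsup_n\int_{|w|\ge 1}\ln|w|\,\kappa_n(\rd w)$, and returns exactly the weak convergence to a probability measure asserted by the corollary; the compactness, uniqueness-by-potential and no-escape-of-mass steps you sketch all live inside its proof. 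The paper's remaining work is thus to verify (ii), and it does so by a direct matrix estimate: $\tau_1 \le \limsup_n\frac{1}{2n}\,\mathrm{tr}\ln(\mathscr{I}_n+\mathscr{J}_n\mathscr{J}_n^\ast)$, the trace is dominated by $\frac{\alpha}{n}\sum_j\ln(1+\beta|\mathbf r_j|^2)$ in terms of the rows $\mathbf r_j$ of $\mathscr{J}_n$, each $|\mathbf r_j|^2$ is bounded by a short sum of the $s_k^{-4}$, and the ergodic theorem together with $\int|\ln s|\,\mu(\rd s)<\infty$ closes the estimate.

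Your alternative for (ii) --- extracting the tail control from Proposition~\ref{goldsheidKhoruzhenkoProposition} itself --- is a sound idea and arguably lighter than the trace computation, but the particular implementation you sketch is circular. You appeal to $\Real\,\Lambda_\mu(-1/\lambda)\sim\tfrac12\ln|\lambda|$ as $|\lambda|\to\infty$, justified by ``consistency with the unit total mass of each $\kappa_n$''; that is precisely the kind of conclusion which becomes available only once the tightness and uniform tail control being sought are already in hand, and an independent proof of the asymptotic (available in the gamma case from (\ref{characteristicExponentFormula}), but not given in the generality in which the corollary is stated) would be needed. Fortunately no asymptotic, and no limit in $\lambda$, is required: fix a single $\lambda_0$ with $|\Imag\,\lambda_0|\ge 1$ at which Proposition~\ref{goldsheidKhoruzhenkoProposition} holds. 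Then $\ln|\lambda_0-\lambda'|\ge 0$ on $[0,\infty)$ and $\ln|\lambda_0-\lambda'|\ge\ln(\lambda'/2)$ for $\lambda'\ge 2|\lambda_0|$, so the boundedness in $n$ of $\int\ln|\lambda_0-\lambda'|\,\kappa_n(\rd\lambda')$ (which convergence gives for free) already forces $\sup_n\int_{\lambda'\ge 1}\ln\lambda'\,\kappa_n(\rd\lambda')<\infty$, i.e.\ $\tau_1<\infty$. With that repair your route is a genuine alternative to the appendix's trace estimate.
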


\begin{proof}
The proof is a specialisation of that given by Goldsheid \& Khoruzhenko (2005) in the
more general case of a non-Hermitian Jacobi matrix. 
See Appendix \ref{goldsheidAppendix} for the details. 
\end{proof}

\begin{theorem}
Let the $s_n$ be independent 
random variables in ${\mathbb R}_+$ with a common distribution $\mu$
that has at least two points of increase. Suppose also that
$$
\int_{{\mathbb R}_+} \left | \ln s \right | \,\mu ( \rd s) < \infty
\quad \text{and} \quad
\int_{{\mathbb R}_+} s^\varepsilon \,\mu ( \rd s) < \infty 
$$
for some $\varepsilon > 0$.
Then, for almost every realisation of the sequence ${\mathbf s}$, for Lebesgue-almost
every $\lambda \in {\mathbb R}_+$,
$$
N(\lambda) = - \frac{2}{\pi} \, \Imag \left [
\Lambda_{\mu} \left ( -1/\lambda + \ri 0 + \right ) \right ]\,.
$$
\label{dysonTheorem}
\end{theorem}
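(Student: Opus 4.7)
The plan is to set up a Thouless-type identity relating $N$ to the boundary behaviour of an analytic function whose real part is the logarithmic potential of $\kappa$. From Corollary \ref{goldsheidKhoruzhenkoCorollary} we know $\kappa$ is a Borel probability measure on $\mathbb{R}_+$. Combining Proposition \ref{goldsheidKhoruzhenkoProposition} with the weak convergence $\kappa_n \to \kappa$ (and a short uniform integrability argument for $\ln|\lambda-\cdot|$ against the sequence $\{\kappa_n\}$ off the real axis), for almost every realisation of $\mathbf{s}$ and Lebesgue-almost every $\lambda \in \mathbb{C}\setminus\mathbb{R}_+$ one obtains
\[
\int_0^\infty \ln|\lambda-\lambda'|\,\kappa(\rd\lambda') = 2\,\Real[\Lambda_\mu(-1/\lambda)] - 2\int_{\mathbb{R}_+}\ln s\,\mu(\rd s).
\]

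For $\lambda$ in the upper half-plane I would then define the complex logarithmic potential
\[
G(\lambda) := \int_0^\infty \ln(\lambda-\lambda')\,\kappa(\rd\lambda')
\]
with the principal branch of $\ln$. It is analytic there (each integrand stays strictly off the cut $\mathbb{R}_-$), and its real part coincides with the right-hand side above. Since $\lambda \mapsto 2\Lambda_\mu(-1/\lambda)$ is also analytic on the upper half-plane (as $-1/\lambda$ stays in the upper half-plane when $\Imag \lambda > 0$) and has the same real part, there must exist a complex constant $D$ with $\Real D = -2\int\ln s\,\mu(\rd s)$ such that $G(\lambda) = 2\Lambda_\mu(-1/\lambda) + D$ for $\Imag\lambda > 0$.

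The decisive step is pinning down $\Imag D$, and this is the main obstacle. I would do it by probing the negative real axis: take $\lambda = x+\ri\epsilon$ with $x < 0$ fixed and let $\epsilon \downarrow 0$. Each factor $\lambda-\lambda'$ with $\lambda' \ge 0$ lies in the second quadrant with argument tending to $\pi$, so bounded convergence together with $\kappa(\mathbb{R}_+) = 1$ gives $\Imag G(x+\ri 0+) = \pi$. Simultaneously $-1/(x+\ri\epsilon)$ approaches the positive real number $-1/x$ from the upper half-plane, and since $\Lambda_\mu$ is real-valued on $\mathbb{R}_+$ (because $Z = \sqrt{t}\,S(t)$ is real and positive for $t > 0$) and extends continuously there (being analytic on $\mathbb{C}\setminus\mathbb{R}_-$), one has $\Imag[2\Lambda_\mu(-1/x+\ri 0+)] = 0$. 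Hence $\Imag D = \pi$, a value that encodes precisely the probability normalisation of $\kappa$.

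To finish, fix a Lebesgue-generic $x > 0$, which is automatically a continuity point of the non-decreasing function $N$ (equivalently, not an atom of $\kappa$), and let $\epsilon \downarrow 0$ in $\lambda = x+\ri\epsilon$. A standard Stieltjes--Perron boundary computation, using $\arg(\lambda-\lambda') \to 0$ for $\lambda' < x$ and $\to \pi$ for $\lambda' > x$, gives $\Imag G(x+\ri 0+) = \pi\,\kappa((x,\infty)) = \pi(1-N(x))$. Equating with $\Imag[2\Lambda_\mu(-1/x+\ri 0+)] + \pi$ yields
\[
N(x) = -\frac{2}{\pi}\,\Imag[\Lambda_\mu(-1/x+\ri 0+)],
\]
as required. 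The two genuinely non-trivial points in this programme are the uniform integrability needed to upgrade Proposition \ref{goldsheidKhoruzhenkoProposition} into a statement about the log potential of the limit measure $\kappa$, and the determination of the constant $\Imag D$ outlined above.
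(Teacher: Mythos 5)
Your argument is a legitimate Thouless-type derivation, but it follows a genuinely different route from the paper's, and the route you choose imports two non-trivial technical burdens that the paper's proof sidesteps entirely.

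The paper works at finite $n$. Writing $\psi_n(\lambda+\ri 0\pm)$ as a product over its zeros and splitting each factor $\ln(\lambda-\lambda_{n,j}+\ri 0\pm)$ into $\ln|\lambda-\lambda_{n,j}|$ plus $\pm\ri\pi$ when $\lambda_{n,j}>\lambda$, one obtains the exact identity
\[
\frac{\ln\psi_n(\lambda+\ri 0\pm)}{n}=\frac{\ln s_{2n+1}}{2n}+\frac{1}{n}\sum_{j=1}^{2n}\ln s_j+\int_0^\infty\ln|\lambda-\lambda'|\,\kappa_n(\rd\lambda')\pm\ri\pi\bigl[1-N_n(\lambda)\bigr].
\]
Taking imaginary parts kills the first three terms, and the theorem drops out of Lemma \ref{asymptoticsOfPsiLemma} and Corollary \ref{goldsheidKhoruzhenkoCorollary}. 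No analytic continuation, no passage of the log potential to the limit measure, no boundary-value argument.

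By contrast, you pass to the limit measure $\kappa$ first and then match two analytic functions on the upper half-plane, fixing the additive constant by probing $\mathbb{R}_-$. The constant-determination is correct and elegant, but the two steps you flag yourself are real obligations, and one of them is heavier than you suggest. The uniform-integrability upgrade to get $\Real G$ from Proposition \ref{goldsheidKhoruzhenkoProposition} is essentially what Goldsheid--Khoruzhenko's $\tau_1<\infty$ condition supplies, so that can be discharged by citing Appendix C. But the claim that $\lambda\mapsto\Lambda_\mu(-1/\lambda)$ is analytic on the upper half-plane is not established anywhere in the paper for general $\mu$ (the parenthetical you give only addresses the domain of definition, not analyticity). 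This is a substantive fact of Craig--Simon/Thouless type --- harmonicity of the Lyapunov exponent off the spectrum and the existence of a holomorphic $w$-function with $\Real w=\gamma$ and $\Imag w=\pi N$ --- and it is precisely what the paper's finite-$n$ computation is designed to avoid proving. A related small gap: Proposition \ref{goldsheidKhoruzhenkoProposition} gives agreement of the real parts only for Lebesgue-a.e.\ $\lambda$; to upgrade to ``everywhere'' (so that the difference is a constant) you need continuity of $\Real\Lambda_\mu(-1/\lambda)$, which is again part of the analyticity package you have not justified. In short: your route can be made to work, but it buys the Thouless identity at the cost of establishing analyticity and continuity of the Lyapunov exponent, whereas the paper's route gets the same identity from an elementary bookkeeping of branch cuts at finite $n$ and then just takes limits term by term.
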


\begin{proof}
Let $\lambda \in {\mathbb R}_+ \backslash {E}_n$. Then
\begin{multline}
\notag
\ln \psi_n (\lambda + \ri 0 \pm ) - \frac{\ln s_{2n+1}}{2n} -
\frac{1}{n} \sum_{j=1}^{2n} \ln s_j = \sum_{j=1}^n \ln (\lambda-\lambda_{n,j}+ \ri 0 \pm ) \\
= \sum_{\lambda_{n,j}<\lambda} \ln (\lambda-\lambda_{n,j}+\ri 0 \pm)
+ \sum_{\lambda_{n,j}>\lambda} \ln (\lambda-\lambda_{n,j}+\ri 0 \pm) \\
=  \sum_{\lambda_{n,j}<\lambda} \ln | \lambda-\lambda_{n,j} |
+ \sum_{\lambda_{n,j}>\lambda} \left ( \pm \ri \pi + \ln | \lambda-\lambda_{n,j} | \right ) \\
=  \sum_{j=1}^n \ln | \lambda-\lambda_{n,j} | \pm \sum_{\lambda_{n,j} > \lambda} \ri \pi\,.
\end{multline}
Hence, we have the identity
\begin{multline}
\frac{\ln \psi_n (\lambda + \ri 0 \pm)}{n} = \frac{\ln s_{2n+1}}{2n} + 
\frac{1}{n} \sum_{j=1}^{2n} \ln s_j \\
+ \int_0^\infty \ln | \lambda - \lambda' | \,\kappa_n ( \rd \lambda')
\pm \ri \pi \left [ 1-N_n(\lambda) \right ]\,.
\label{dysonIdentity}
\end{multline}
Take the imaginary part.
The result then follows from lemma \ref{asymptoticsOfPsiLemma} and corollary 
\ref{goldsheidKhoruzhenkoCorollary}.
\end{proof}

\begin{corollary}
Suppose that the $s_n$ are independent and gamma-distributed with parameters $a$ and $b$. Then,
for $\lambda >0$, we have the following formula for the density of states:
$$
\varrho (\lambda) := N'(\lambda) =
-\frac{2}{\pi^2 \lambda} \,\partial_a \left [ 
\frac{1}{J_a^2 \left ( \frac{2}{b \sqrt{\lambda}} \right ) + Y_a^2 \left ( \frac{2}{b \sqrt{\lambda}} \right )} \right ]\,.
$$
\label{dysonCorollary}
\end{corollary}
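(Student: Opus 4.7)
The plan is to substitute formula \eqref{characteristicExponentFormula} for $\Lambda_\mu$ into theorem \ref{dysonTheorem}, evaluate on the positive real axis via the branch cut, and then simplify using a classical connection formula together with the Bessel Wronskian. The gamma distribution trivially satisfies the two hypotheses of theorem \ref{dysonTheorem}: it has infinitely many points of increase, and both $\int s^\varepsilon\,\mu(\rd s)$ (for any $\varepsilon>0$) and $\int|\ln s|\,\mu(\rd s)$ are finite. Hence, for almost every $\lambda>0$,
$$
N(\lambda)=-\frac{2}{\pi}\,\Imag\,\partial_a\ln K_a\!\left(\frac{2\sqrt{-1/\lambda+\ri 0+}}{b}\right).
$$
The branch convention forces $\sqrt{-1/\lambda+\ri 0+}=\ri/\sqrt{\lambda}$, so after setting $y:=2/(b\sqrt{\lambda})$ the task reduces to computing $\Imag\,\partial_a\ln K_a(\ri y)$ for $y>0$.

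For this I would use the standard connection identity
$$
K_a(\ri y)=-\frac{\ri\pi}{2}\,\re^{-\ri\pi a/2}\bigl[J_a(y)-\ri Y_a(y)\bigr].
$$
Writing $J_a(y)-\ri Y_a(y)=\sqrt{J_a^2+Y_a^2}\,\re^{-\ri\theta(a,y)}$ with $\theta(a,y):=\arctan[Y_a(y)/J_a(y)]$ (taken as a continuous branch in $a$), the logarithm picks up a real part together with the imaginary contribution $-\ri(\pi/2+\pi a/2+\theta)$. Differentiating in $a$ and taking the imaginary part produces $-\pi/2-\partial_a\theta(a,y)$, so that
$$
N(\lambda)=1+\frac{2}{\pi}\,\partial_a\theta\bigl(a,\,2/(b\sqrt{\lambda})\bigr).
$$

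It remains to differentiate $N$ in $\lambda$ and interchange with $\partial_a$. A routine calculation using the Bessel Wronskian $J_a(y)Y_a'(y)-J_a'(y)Y_a(y)=2/(\pi y)$ collapses $\partial_y\theta$ to $2/[\pi y(J_a^2+Y_a^2)]$. Combining this with the chain-rule factor $\mathrm{d}y/\mathrm{d}\lambda=-y/(2\lambda)$ cancels the $y$ in the denominator and gives $\partial_\lambda\theta=-1/[\pi\lambda(J_a^2+Y_a^2)]$; applying $\partial_a$ produces precisely the expression claimed in the corollary.

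I expect no serious obstacle. The only points that demand a little care are (i) selecting the correct branch of the connection formula for $K_a$ on the positive imaginary axis, which is dictated by the $\ri 0+$ prescription together with the paper's square-root convention; and (ii) the exchange of $\partial_a$ and $\partial_\lambda$, which is justified by the joint real-analyticity of $K_a(z)$, $J_a(y)$ and $Y_a(y)$ in their parameters on the relevant open domain (and by the fact that $(J_a,Y_a)$ never vanishes simultaneously, so $\theta$ is well defined locally).
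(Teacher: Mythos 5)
Your proposal is correct and follows the same route as the paper: evaluating the explicit Bessel formula for $\Lambda_\mu$ on the branch cut via the connection formula $K_a(\ri z)=-\tfrac{\pi}{2}\re^{-\ri a\pi/2}[Y_a(z)+\ri J_a(z)]$, expressing $N(\lambda)=1+\tfrac{2}{\pi}\partial_a\arctan(Y_a/J_a)$, then differentiating in $\lambda$ and invoking the Bessel Wronskian. The only cosmetic difference is that you compute $\partial_\lambda\theta$ before applying $\partial_a$ whereas the paper interchanges the order after writing $1-N(\lambda)=-\tfrac{2}{\pi}\partial_a\arctan(Y_a/J_a)$; the ingredients and the calculation are identical.
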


\begin{proof}
Let $\lambda > 0$ and set
$$
z = \frac{2}{b \sqrt{\lambda}}\,.
$$ 
For $t=-1/\lambda+ \ri 0+$, we have
$$
K_a \left ( \frac{2 \sqrt{t}}{b} \right )
= K_a ( \ri z ) = -\frac{\pi}{2} \,\re^{-\ri a \frac{\pi}{2}} \,\left [Y_a (z)+
 \ri J_a (z) \right ]\,.
$$
Hence
$$
\partial_a K_a \left ( \frac{2 \sqrt{t}}{b} \right )
= - \ri \frac{\pi}{2} \, K_a \left ( \frac{2 \sqrt{t}}{b} \right ) - \frac{\pi}{2} \re^{-\ri a \frac{\pi}{2}}
\left [ \partial_a Y_a (z) + \ri \partial_a J_a (z) \right ]
$$
and so 
\begin{multline}
\notag
\Lambda_{\mu} (t) = \frac{\partial_a K_a \left ( \frac{2 \sqrt{t}}{b} \right )}{K_a \left ( \frac{2 \sqrt{t}}{b} \right )} =  \overline{K_a \left ( \frac{2 \sqrt{t}}{b} \right )} \frac{\partial_a K_a \left ( \frac{2 \sqrt{t}}{b} \right )}{\left | K_a \left ( \frac{2 \sqrt{t}}{b} \right ) \right |^2} \\
= - \ri \frac{\pi}{2} + \ri \frac{Y_a (z) \partial_a J_a(z) - J_a (z) \partial_a Y_a (z)}{J_a(z)^2+Y_a(z)^2} + \frac{J_a(z) \partial_a J_a(z)+ Y_a(z) \partial_a Y_a(z)}{J_a(z)^2+Y_a(z)^2}\,.
\end{multline}
We deduce the formulae
\begin{equation}
\Real \left [ \Lambda_{\mu} \left ( -\frac{1}{\lambda} + \ri 0+ \right ) \right ]
= \frac{J_a(z) \partial_a J_a(z)+ Y_a(z) \partial_a Y_a(z)}{J_a(z)^2+Y_a(z)^2} 
\label{realPartOflyapunovExponentOnTheNegativeAxis}
\end{equation}
and
\begin{equation}
\Imag \left [ \Lambda_{\mu} \left ( -\frac{1}{\lambda} + \ri 0+ \right ) \right ]
= - \frac{\pi}{2} + \frac{Y_a (z) \partial_a J_a(z) - J_a (z) \partial_a Y_a (z)}{J_a(z)^2+Y_a(z)^2}\,.
\label{imaginaryPartOflyapunovExponentOnTheNegativeAxis}
\end{equation}
The proposition, together with the last equation, then yield
\begin{equation}
\notag
1-N(\lambda) = \frac{2}{\pi} 
\frac{Y_a (z) \partial_a J_a(z) - J_a(z) \partial_a Y_a(z)}{J_a(z)^2 + Y_a(z)^2} \\
= - \frac{2}{\pi} \partial_a  \left [ \arctan \frac{Y_a (z)}{J_a(z)} \right ]\,.
\end{equation}
Differentiating both sides with respect to $\lambda$, we find
\begin{equation}
\notag
-\varrho (\lambda)= - \frac{2}{\pi} \frac{\rd z}{\rd \lambda} \frac{\rd}{\rd z}
\partial_a  \left [ \arctan \frac{Y_a (z)}{J_a(z)} \right ] =
\frac{2}{\pi} \frac{1}{b \lambda^{3/2}} \frac{\rd}{\rd z}
\partial_a  \left [ \arctan \frac{Y_a (z)}{J_a(z)} \right ]\,.
\end{equation}
We obtain the desired result by changing the order of differentiation on the right-hand side, and
making use of the identity
$$
J_a(z) Y_a ' (z)  - Y_a(z) J_a'(z) = \frac{2}{\pi z}\,.
$$
\end{proof}

\begin{corollary}
Under the same assumption, for almost every realisation of the sequence ${\mathbf s}$, 
$$
\text{\em spec} ( \sigma) = [0,\,\infty)\,.
$$
\label{densityOfStatesCorollary}
\end{corollary}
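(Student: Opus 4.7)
The plan is to combine the explicit formula for the density of states in Corollary~\ref{dysonCorollary} with the Chebyshev--Markov--Stieltjes theorem recalled in \S\ref{densityOfStatesSubsection}. Since $\sigma$ is supported on the non-negative half-line by construction, the inclusion $\text{spec}(\sigma) \subseteq [0,\infty)$ is automatic, and it suffices to show that the points of increase of $\sigma$ form a dense subset of $(0,\infty)$.

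The heart of the argument is to establish that $\varrho(\lambda) > 0$ for every $\lambda > 0$. In view of the formula in Corollary~\ref{dysonCorollary}, this reduces to showing that, for each fixed $z > 0$, the function $a \mapsto J_a^2(z) + Y_a^2(z)$ is strictly increasing on $(0,\infty)$. My preferred route is Nicholson's integral representation
$$
J_a^2(z) + Y_a^2(z) = \frac{8}{\pi^2} \int_0^\infty K_0\bigl(2 z \sinh t\bigr)\,\cosh(2 a t)\,\rd t,
$$
from which strict monotonicity in $a$ is immediate: the kernel $K_0(2z \sinh t)$ is strictly positive for $t>0$, and $\partial_a \cosh(2a t) = 2t \sinh(2a t) > 0$ for $a,t > 0$.

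Granted positivity of $\varrho$ on $(0,\infty)$, the limiting measure $\kappa$ assigns strictly positive mass to every non-empty open sub-interval $I \subset (0,\infty)$. The weak convergence $\kappa_n \to \kappa$ given by Corollary~\ref{goldsheidKhoruzhenkoCorollary} then forces $\kappa_n(I) \to \kappa(I) > 0$, so the number of roots $\lambda_{n,j}$ of $\psi_n$ lying in $I$ tends to infinity with $n$; in particular, for $n$ large enough, $I$ contains two consecutive zeros of $\psi_n$. By the Chebyshev--Markov--Stieltjes theorem, any such consecutive pair sandwiches a point of increase of $\sigma$, so every non-empty open sub-interval $I$ of $(0,\infty)$ contains a point of increase. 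Since $\text{spec}(\sigma)$ is closed in $[0,\infty)$ and dense in $(0,\infty)$, it coincides with $[0,\infty)$.

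The main obstacle is the positivity claim of the second paragraph. Direct manipulation of the ascending series for $J_a$ and $Y_a$, or of the $a$-derivatives $\partial_a J_a$ and $\partial_a Y_a$ appearing implicitly in the formula of Corollary~\ref{dysonCorollary}, is awkward because of the sign-alternating contributions, and monotonicity in $a$ is far from transparent at the level of these series. Nicholson's formula circumvents this by exhibiting $J_a^2 + Y_a^2$ as a manifestly positive integral that is patently monotone in $a$.
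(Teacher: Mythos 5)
Your argument is correct and is essentially the paper's own proof: the paper likewise establishes strict positivity of $\varrho$ by differentiating Nicholson's formula $J_a^2(z)+Y_a^2(z)=\frac{8}{\pi^2}\int_0^\infty K_0(2z\sinh t)\cosh(2at)\,\rd t$ with respect to $a$, and then concludes via the Chebyshev--Markov--Stieltjes connection between the zeros of $\psi_n$ and the points of increase of $\sigma$ (which the paper leaves implicit, having noted it in \S\ref{densityOfStatesSubsection}). Your write-up merely makes that final density-of-zeros step explicit, which is a faithful elaboration rather than a different route.
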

\begin{proof}
By differentiating the identity (see Watson 1966, \S 13.73)
$$
J_a^2 (z)+Y_a^2(z)
= \frac{8}{\pi^2} \int_0^\infty K_0 \left ( 2 z \sinh t \right ) \,\cosh (2 at) \,\rd t
$$
with respect to $a$, we deduce that $\varrho$ is strictly positive.
\end{proof}

\section{Singularity of the spectrum}
\label{singularitySection}

Corollary \ref{densityOfStatesCorollary}
implies in particular that the radius of convergence
of the generating series of the moments
is zero almost surely; in other words, the random Stieltjes functions that we have
constructed are not analytic at the origin.

The problem of determining the {\em nature} of the spectrum is more delicate.
Every measure may be decomposed
into three disjoint parts: its absolutely continuous, singular continuous and discrete parts,
denoted by $\sigma_{ac}$, $\sigma_{sc}$ and $\sigma_d$ respectively. 
Ishii (1973) and Yoshioka (1973) showed
that the spectrum of the absolutely continuous part 
is given by the formula
\begin{equation}
\text{spec} \left ( \sigma_{ac} \right ) = \overline{\left \{\lambda >0\,:\;
\lim_{t \rightarrow -\frac{1}{\lambda}+\ri 0+} \Real\,\Lambda_{\mu} (t) = 0 \right \}}\,.
\label{ishiiFormula}
\end{equation}
This result may be established by examining the resolvent of the Jacobi operator $\mathscr{J}$.
The following result is essentially equivalent.

\begin{proposition}
Let the $s_n$ be independent 
random variables in ${\mathbb R}_+$ with a common distribution $\mu$
that has at least two points of increase.  Assume that
$$
\int_{{\mathbb R}_+} \left | \ln s \right |\,\mu (\text{\em d} s) < \infty \quad \text{and} \quad
\int_{{\mathbb R}_+} s^\varepsilon\,\mu (\text{\em d} s) < \infty
$$
for some $\varepsilon > 0$. Then, for almost every realisation
of the sequence ${\mathbf s}$, 
$$
\text{\em spec} (\sigma_{ac}) = \emptyset\,.
$$
\label{essentialSpectrumProposition}
\end{proposition}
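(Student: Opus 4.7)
The plan is to derive the proposition directly from Ishii's formula (\ref{ishiiFormula}), recalled at the opening of \S\ref{singularitySection}, together with the positivity assertion of Proposition \ref{furstenbergProposition}. By (\ref{ishiiFormula}),
$$
\text{spec}(\sigma_{ac}) = \overline{\left\{\lambda > 0 :\; \lim_{t \to -1/\lambda + \ri 0+} \Real\,\Lambda_{\mu}(t) = 0\right\}},
$$
so it suffices to prove that the set inside the closure is Lebesgue null in $(0,\infty)$; under the standard essential-closure interpretation of the Ishii formula, a Lebesgue null set has empty essential closure, and the conclusion follows.

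To establish the null-set property, I would apply the second half of Proposition \ref{furstenbergProposition}. Its hypotheses---$\mu$ has at least two points of increase and $\int s^\varepsilon\,\mu(\rd s) < \infty$ for some $\varepsilon > 0$---are contained in the present assumptions. Specialising to $t = -x + \ri 0+$ with $x > 0$, the proposition yields that, for almost every realisation of $\mathbf{s}$ and for Lebesgue-almost every $x > 0$,
$$
\Real\,\Lambda_{\mu}(-x + \ri 0+) > 0.
$$
Since the map $\lambda \mapsto 1/\lambda$ is a diffeomorphism of $(0,\infty)$ that preserves the class of Lebesgue null sets, it follows that for Lebesgue-almost every $\lambda > 0$, $\Real\,\Lambda_{\mu}(-1/\lambda + \ri 0+) > 0$. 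Hence the set appearing inside the closure in Ishii's formula is a Lebesgue null subset of $(0,\infty)$.

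The main technical point---essentially the only obstacle---is the interpretation of the overline in (\ref{ishiiFormula}): the \emph{topological} closure of a null set can be all of $[0,\infty)$, so one must invoke Ishii's theorem in its standard formulation, where the closure is understood in the essential (Lebesgue) sense, or equivalently appeal to the Pastur--Ishii statement that $\sigma_{ac}$ is supported on $\{\lambda : \Real\,\Lambda_{\mu} = 0\}$ up to Lebesgue null modifications. With either interpretation, the null-set property established above forces $\sigma_{ac}$ to be the zero measure, so $\text{spec}(\sigma_{ac}) = \emptyset$ almost surely, as claimed. The extra moment hypothesis $\int|\ln s|\,\mu(\rd s) < \infty$ appearing in the statement of the proposition is not needed for this argument; it is retained only to keep the hypotheses uniform with those of Theorem \ref{dysonTheorem} used elsewhere in the paper.
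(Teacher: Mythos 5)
Your argument is correct in outline, but it takes a genuinely different route from the paper, and its validity rests on importing a stronger statement than the one the paper actually records. The paper states (\ref{ishiiFormula}) with a topological closure, attributes it to Ishii and Yoshioka, and then pointedly does \emph{not} use it: its proof of Proposition \ref{essentialSpectrumProposition} is a direct, self-contained argument. Namely, by Lemma \ref{asymptoticsOfPsiLemma} and Corollary \ref{positivityCorollary}, for almost every realisation the orthonormal polynomials satisfy $\lim_n n^{-1}\ln|\psi_n(\lambda)|=2\,\Real\,\Lambda_\mu(-1/\lambda+\ri 0+)>0$ for Lebesgue-almost every $\lambda>0$, so the set ${\mathbb S}_\eta$ of $\lambda$ with $\psi_n(\lambda)=o\bigl(\sqrt{n}\,[\ln n]^{1+\eta}\bigr)$ is Lebesgue-null; on the other hand the Men'shov--Rademacher theorem gives $\psi_n(\lambda)=o\bigl(\sqrt{n}\,[\ln n]^{1+\eta}\bigr)$ for $\sigma$-almost every $\lambda$, so $\sigma$ is carried by the null set ${\mathbb S}_\eta$ and hence $\sigma_{ac}=0$. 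In effect the paper proves (the relevant half of) the Ishii--Pastur statement rather than quoting it, which is consistent with its remark that transferring such results to the semi-infinite setting involves technicalities it wishes to avoid. Your proposal instead treats the Ishii--Pastur theorem, in its essential-closure/support formulation, as a black box; you correctly identify that the formula as literally written in (\ref{ishiiFormula}) (topological closure) is insufficient, since the closure of a null set can be all of $[0,\infty)$, and that the argument only works with the essential-support version. Granting that citation for this semi-infinite random Jacobi operator, your deduction from Proposition \ref{furstenbergProposition} (indeed Corollary \ref{positivityCorollary} gives positivity for \emph{every} $x>0$, since $\Lambda_\mu$ is non-random) is sound; what the paper's route buys is independence from that external input, at the modest cost of invoking the Men'shov--Rademacher bound. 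Your side remark that the hypothesis $\int|\ln s|\,\mu(\rd s)<\infty$ is dispensable is plausible for your route, though note that in the paper's route some control of $\ln s_{2n+1}/n$ (hence of the negative tail of $\ln s$) enters through the proof of Lemma \ref{asymptoticsOfPsiLemma}, so the hypothesis is not purely cosmetic.
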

\begin{proof}
By lemma \ref{asymptoticsOfPsiLemma} and corollary \ref{positivityCorollary}, for almost every 
realisation of ${\mathbf s}$, for Lebesgue-almost every $\lambda \in {\mathbb R}_+$, we have
\begin{equation}
\lim_{n \rightarrow \infty} \frac{\ln \left | \psi_n (\lambda) \right |}{n} > 0\,.
\label{exponentialGrowth}
\end{equation}
Now, let $\eta >0$ and consider the set
$$
{\mathbb S}_\eta := \left \{ \lambda \in {\mathbb R}_+\,:\;\psi_n (\lambda) = o \left ( \sqrt{n} \,[ \ln n ]^{1+\eta} \right ) \;\; \text{as $n \rightarrow \infty$}
\right \}\,.
$$
By equation (\ref{exponentialGrowth}),
this set has Lebesgue measure zero almost surely. On the other hand, it follows from the Men'shov--Rademacher theorem (see Nikishin \& Sorokin 1988, proposition 8.3) that
for $\sigma$-almost $\lambda \in {\mathbb R}_+$, 
$$
\psi_n (\lambda) = o \left ( \sqrt{n} \,[\ln n]^{1+\eta} \right ) \;\; \text{as $n \rightarrow \infty$}\,.
$$
So, almost surely,
for every $\sigma$-measurable set $A$,
\begin{multline}
\notag
\int_{A} \sigma (\rd \lambda ) = \int_{A \cap {\mathbb S}_\eta} \sigma (\rd \lambda) =
\int_{A \cap {\mathbb S}_\eta} \sigma_{ac} (\rd \lambda) + \int_{A \cap {\mathbb S}_\eta} \sigma_{sc} (\rd \lambda)
+\int_{A \cap {\mathbb S}_\eta} \sigma_d (\rd \lambda)\\
=\int_{A \cap {\mathbb S}_\eta} \sigma_{sc} (\rd \lambda)
+\int_{A \cap {\mathbb S}_\eta} \sigma_d (\rd \lambda)
= \int_{A} \sigma_{sc} (\rd \lambda)
+\int_{A} \sigma_d (\rd \lambda)
\,.
\end{multline}
\end{proof}

\section{The rate of convergence}
\label{convergenceSection} 
Let $\{u_n\}_{n \in {\mathbb N}}$ be the sequence
defined by the recurrence relation (\ref{differenceEquation}) with the
starting values $u_{-1}=0$ and $u_0=1$. Also, denote by $\mathscr{T}$ the 
shift operator on the space of complex sequences, i.e.
$$
\mathscr{T} \left ( s_1,s_2,\ldots \right ) = \left (s_2,s_3,\ldots \right )\,.
$$ 
In order to emphasise the dependence of the continued fraction (\ref{complexContinuedFraction}) on the
sequence $\mathbf s$, we shall sometimes write $Z(\mathbf s)$ and $Z_n (\mathbf s)$ instead
of $Z$ and $Z_n$.
We have the following convenient representation of the
error:
\begin{lemma}
$$
Z (\mathbf s )-Z_n (\mathbf s)
= \frac{(-1)^n}{u_n} \left (\prod_{j=0}^n Z \left ( \mathscr{T}^j {\mathbf s} \right )\right )\,.
$$
\label{errorLemma}
\end{lemma}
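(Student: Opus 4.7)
The plan is to prove this by induction on $n$, exploiting the self-similarity of the continued fraction under the shift operator $\mathscr{T}$.

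First, I would derive a one-step recursion for the error. From the definitions (\ref{complexContinuedFraction}) and (\ref{Zn}),
$$
Z(\mathbf{s}) = \frac{1}{s_1/\sqrt{t} + Z(\mathscr{T}\mathbf{s})}, \qquad Z_n(\mathbf{s}) = \frac{1}{s_1/\sqrt{t} + Z_{n-1}(\mathscr{T}\mathbf{s})},
$$
and subtracting these gives directly
$$
Z(\mathbf{s}) - Z_n(\mathbf{s}) = -\,Z(\mathbf{s})\,Z_n(\mathbf{s})\,\bigl[Z(\mathscr{T}\mathbf{s}) - Z_{n-1}(\mathscr{T}\mathbf{s})\bigr].
$$
This is already the essential mechanism: it yields the alternating sign, while iteration will produce the product over shifted copies of $Z$.

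The next key step is to show that $Z_n$ itself admits the ratio representation
$$
Z_n(\mathbf{s}) = \frac{u_{n-1}(\mathscr{T}\mathbf{s})}{u_n(\mathbf{s})}\,.
$$
To establish this, I would first prove the identity $P_n(\mathbf{s}) = Q_{n-1}(\mathscr{T}\mathbf{s})$ between the numerator of the $n$th convergent and the shifted denominator; this is a standard consequence of unwinding (\ref{convergent}) in the self-similar form $S_n(\mathbf{s}) = 1/(s_1 + t\,S_{n-1}(\mathscr{T}\mathbf{s}))$. The ratio $Z_n = \sqrt{t}\,P_n/Q_n$ then simplifies after invoking lemma \ref{denominatorLemma} twice --- once for $\mathbf{s}$ and once for $\mathscr{T}\mathbf{s}$ --- because the powers of $\sqrt{t}$ cancel cleanly.

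With these two ingredients in hand, the induction closes: the base case $n=0$ is trivial (with the natural convention $Z_0 \equiv 0$ and $u_0 = 1$), while the inductive step substitutes the hypothesis applied to $\mathscr{T}\mathbf{s}$ into the one-step recursion and uses $Z_n(\mathbf{s})/u_{n-1}(\mathscr{T}\mathbf{s}) = 1/u_n(\mathbf{s})$ to telescope the shifted indices into the single factor $1/u_n(\mathbf{s})$. I do not foresee any serious technical difficulty; the only delicate point is the index bookkeeping needed to recognise $P_n(\mathbf{s}) = Q_{n-1}(\mathscr{T}\mathbf{s})$ and to align it correctly with lemma \ref{denominatorLemma} so that the overall powers of $\sqrt{t}$ disappear.
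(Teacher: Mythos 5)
Your proposal is correct, but it follows a different route from the paper's. The paper proves by induction the tail representation $Z(\mathscr{T}^n\mathbf{s}) = -\tfrac{1}{\sqrt{t}}\,\tfrac{Q_nS-P_n}{Q_{n-1}S-P_{n-1}}$ and then telescopes the product $\prod_{j=0}^n Z(\mathscr{T}^j\mathbf{s})$ in a single step into $(-1/\sqrt{t})^n\sqrt{t}\,Q_n(S-S_n)$, invoking lemma \ref{denominatorLemma} once at the end; the induction and the telescoping both live in the residuals $Q_jS-P_j$, and the function $S$ itself appears throughout. You instead induct directly on the error, driven by the one-step recursion $Z-Z_n=-ZZ_n\bigl[Z(\mathscr{T}\mathbf{s})-Z_{n-1}(\mathscr{T}\mathbf{s})\bigr]$ together with the convergent identity $Z_n(\mathbf{s})=u_{n-1}(\mathscr{T}\mathbf{s})/u_n(\mathbf{s})$, which rests on $P_n(\mathbf{s})=Q_{n-1}(\mathscr{T}\mathbf{s})$ and on applying lemma \ref{denominatorLemma} to both $\mathbf{s}$ and $\mathscr{T}\mathbf{s}$; I have checked that all the signs, the telescoping of the shifted indices, and the base case (with $Z_0\equiv 0$, $u_0=1$) work out. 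What each approach buys: the paper's argument needs no auxiliary fact beyond the tail identity and handles the whole product at once, while yours keeps the induction entirely at the level of the truncations and the shift (never touching $S$ until the end), at the price of the extra lemma $P_n(\mathbf{s})=Q_{n-1}(\mathscr{T}\mathbf{s})$. One small suggestion: justify that identity not by ``unwinding'' the self-similar form of $S_n$ (which implicitly requires identifying a particular numerator--denominator pair) but by observing that $n\mapsto P_n(\mathbf{s})$ and $n\mapsto Q_{n-1}(\mathscr{T}\mathbf{s})$ satisfy the same three-term recurrence (\ref{threeTermRecurrence}) with the same initial values $0$ and $1$; that makes the index bookkeeping you were worried about a one-line verification.
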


\begin{proof}
By using the recurrence relations (\ref{threeTermRecurrence}) and the identity
$$
Z \left ({\mathscr T}^{n} \mathbf s \right ) = \frac{1}{\frac{s_{n+1}}{\sqrt{t}}+ Z \left ({\mathscr T}^{n+1} \mathbf s \right )}\,,
$$
it is straightforward
to show (by induction on $n$) that
$$
Z \left ( {\mathscr T}^n {\mathbf s} \right ) = - \frac{1}{\sqrt{t}}
\frac{Q_n S-P_n}{Q_{n-1} S-P_{n-1}}\,.
$$
Then
\begin{multline}
\notag
\prod_{j=0}^n Z \left ({\mathscr T}^j {\mathbf s} \right ) = (-1/\sqrt{t})^{n+1}
\prod_{j=0}^n \frac{Q_{j}S-P_j}{Q_{j-1}S-P_{j-1}} \\
= (-1/\sqrt{t})^{n+1} \frac{Q_n S-P_n}{Q_{-1} S - P_{-1}} 
= (-1/\sqrt{t})^n Q_n \sqrt{t} \left ( S - S_n \right )\,.
\end{multline}
Lemma \ref{denominatorLemma} then yields the desired result.
\end{proof}

\begin{theorem}
Let the $s_n$ be positive independent random variables with a common distribution $\mu$
that has at least two points of increase. Suppose also that there exists $\varepsilon >0$ such that
$$
\int_{{\mathbb R}_+} s^\varepsilon\,\mu (\text{\em d} s) < \infty\,.
$$
Then, for almost every realisation of
the sequence ${\mathbf s}$, for Lebesgue-almost every $t \in {\mathbb C} \backslash {\mathbb R}_-$,
$$
\lim_{n \rightarrow \infty} \frac{\ln \left | S(t)-S_n(t) \right |}{n} = 
-2 \,\text{{\em Re}} \left [ \Lambda_{\mu} (t) \right ]
\,.
$$
\label{errorTheorem}
\end{theorem}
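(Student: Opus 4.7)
The plan is to read the asymptotics off the factorised form of the error supplied by Lemma \ref{errorLemma}. Using $S(t)-S_n(t)=(Z-Z_n)/\sqrt{t}$ from (\ref{Zn}), then taking absolute values, logarithms, and dividing by $n$, one obtains the decomposition
\begin{equation*}
\frac{\ln|S(t)-S_n(t)|}{n} \;=\; -\frac{\ln|u_n|}{n} \;+\; \frac{1}{n}\sum_{j=0}^{n} \ln\bigl|Z(\mathscr{T}^j\mathbf{s})\bigr| \;-\; \frac{\ln|t|}{2n}.
\end{equation*}
The boundary term $\frac{\ln|t|}{2n}$ is negligible, so the proof reduces to identifying the limits of the other two terms.

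First, Proposition \ref{furstenbergProposition} is exactly what is needed for the first term: almost surely in $\mathbf{s}$ and for Lebesgue-a.e.\ $t\in\mathbb{C}\setminus\mathbb{R}_-$, we have $n^{-1}\ln|u_n|\to \Real[\Lambda_\mu(t)]$. For the ergodic average, the shift $\mathscr{T}$ is measure-preserving and ergodic on the product space $(\mathbb{R}_+^{\mathbb{N}},\mu^{\otimes\mathbb{N}})$ (Bernoulli shift), and by the discussion in \S\ref{lyapunovSubsection} the law of $Z(\mathbf{s})$ under this product measure is precisely the stationary distribution $\nu_\mu$. Provided $\ln|Z|\in L^1(\nu_\mu)$, Birkhoff's ergodic theorem gives
\begin{equation*}
\frac{1}{n}\sum_{j=0}^{n} \ln\bigl|Z(\mathscr{T}^j\mathbf{s})\bigr| \;\longrightarrow\; \int_{\mathbb{C}} \ln|z|\,\nu_\mu(\rd z) \;=\; -\Real[\Lambda_\mu(t)],
\end{equation*}
where the last equality comes from definition (\ref{characteristicExponent}) together with the identity $\Real[\ln z]=\ln|z|$ for the principal branch. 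Combining the three pieces yields the claimed limit $-2\,\Real[\Lambda_\mu(t)]$.

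The main obstacle is the integrability condition $\int|\ln|z||\,\nu_\mu(\rd z)<\infty$ that licenses Birkhoff's theorem. I would establish it by exploiting the invariance identity $\hat{Z}_1 = 1/(s_1/\sqrt{t}+\hat{Z}_0)$, which reduces the claim to controlling $\mathbb{E}\bigl|\ln|s_1/\sqrt{t}+\hat{Z}_0|\bigr|$; since $\Imag(s_1/\sqrt{t})$ has a definite sign away from $\mathbb{R}_-$, the only singular contributions come from the tails of $|s_1|$ (tamed by $\int s^{\varepsilon}\mu(\rd s)<\infty$) and from large $|\hat{Z}_0|$ (tamed by the reciprocal symmetry of the functional equation). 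This is the same kind of bound that underlies the Furstenberg--Kesten finiteness of $\Real[\Lambda_\mu(t)]$ used in Proposition \ref{furstenbergProposition}, so the integrability effectively comes for free from the hypotheses already in place. Once it is in hand, the three-step assembly above completes the proof.
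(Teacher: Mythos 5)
Your argument is correct and takes essentially the same route as the paper's own proof: the same decomposition of $\ln|S(t)-S_n(t)|/n$ via Lemma \ref{errorLemma}, Proposition \ref{furstenbergProposition} for the $-\ln|u_n|/n$ term, the ergodic theorem for the Birkhoff average of $\ln|Z(\mathscr{T}^j\mathbf{s})|$, and a standard Fubini argument to pass from fixed $t$ to Lebesgue-almost every $t$. The integrability of $\ln|z|$ with respect to $\nu_\mu$ that you single out as the main obstacle is already supplied by Corollary \ref{furstenbergCorollary}, which identifies $-\int_{\mathbb C}\ln|z|\,\nu_{\mu}(\rd z)$ with the finite Lyapunov exponent $\gamma_{\tilde\mu}$ under exactly the moment hypothesis assumed here, so it does not need to be re-derived from the functional equation.
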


\begin{proof}
Let $t \in {\mathbb C} \backslash {\mathbb R}_-$ be fixed.
We have
$$
S(t) = \sqrt{t} Z ({\mathbf s}) \quad \text{and} \quad S_n(t) = \sqrt{t} Z_n ({\mathbf s})\,.
$$
Hence, by lemma \ref{errorLemma},
\begin{equation}
\frac{\ln \left | S (t)-S_n(t)\right |}{n} = \frac{\ln |t|}{2 n} - \frac{\ln |u_n|}{n} + \frac{1}{n} \sum_{j=1}^n
\ln \left | Z \left ( {\mathscr T}^j {\mathbf s} \right ) \right |\,.
\label{errorEquation}
\end{equation} 
Consider the almost-sure limit of each of the three terms on the right-hand side of this equality as
$n \rightarrow \infty$.
The first term tends to zero.
By proposition \ref{furstenbergProposition}, the second term tends to
$$
-\text{Re} \left [ \Lambda_{\mu} (t) \right ]\,.
$$
Finally, it follows easily from the ergodic theorem that the third term tends to the same limit.
The result follows from a standard argument involving the
use of Fubini's theorem.

\end{proof}
\section{A numerical illustration}
\label{numericalSection}

Following the example of Dyson (1953), it is instructive to begin with an examination
of the (determistic) case where
$$
\mu = \mu^{\infty} := \delta_1\,.
$$
Set
$$
S^\infty (t) := \cfrac{1}{1 + \cfrac{t}{1 + \cfrac{t}{1+\cdots}}}\,, \quad t \in {\mathbb C} \backslash {\mathbb R}_-\,. 
$$
Then
$$
S^{\infty} (t) =
\frac{1}{1+\sqrt{1+4t}}
= \int_0^\infty \frac{\sigma^\infty (\rd x)}{1+xt}\,,
$$
where
$$
\sigma^{\infty} (\rd x) = \begin{cases}
\frac{1}{2 \pi} \sqrt{4/x-1} \,\rd x & \text{if $0 < x <4$} \\
0 & \text{if $x>4$}
\end{cases}\,.
$$
Note that $\mu^{\infty}$ has only {\em one} point of increase, and so the hypothesis of Proposition
\ref{essentialSpectrumProposition} is not satisfied. Indeed, for this choice of $\mu$, the spectrum
of the measure $\sigma$ is absolutely continuous.

In this case, the continued fraction (\ref{complexContinuedFraction}) reduces to
$$
Z^{\infty} = \frac{2}{\sqrt{1/t+4}+1/\sqrt{t}}
$$
and so the complex Lyapunov exponent is 
$$
\Lambda_{\mu^\infty} (t) = - \int_{\mathbb C} \ln z \,\nu_{\mu^\infty} (\rd z)
= - \int_{\mathbb C} \ln z \,\delta_{Z^\infty} (\rd z) = \ln \frac{\sqrt{1/t+4}+1/\sqrt{t}}{2}\,.
$$
In particular, an elementary calculation shows that
$$
N^{\infty} (\lambda) = 
\begin{cases}
1 - \frac{2}{\pi} \arccos{\frac{\sqrt{\lambda}}{2}} & \text{if $0 < \lambda < 4$}\\
1 & \text{if $\lambda > 4$}
\end{cases} \,.
$$

Next, let $a \in {\mathbb Z}_+$ and denote by $\mu^{a}$ the gamma distribution with
$b=1/a$. As Dyson remarked, this distribution has mean $1$ and variance $1/a$
and so we may, for large $a$, view it as a perturbation of
$\mu^{\infty}$. Indeed,
by using our explicit formula for $\Lambda_{\mu^a}$
together with the large-order expansions in Abramowitz \& Stegun (1964), \S 9.7, we find
$$
\Lambda_{\mu^a} (t) \sim \Lambda_{\mu^\infty} (t) + \frac{1}{a} \frac{1+8 t}{2(1+4t)} + O \left ( \frac{1}{a^2} \right ) \quad \text{as $a \rightarrow \infty$}\,, \;\; a \in {\mathbb Z}_+\,.
$$
Likewise, setting
$$
\beta := \arccos{\frac{\sqrt{\lambda}}{2}}\,, \quad 0 < \lambda < 4\,,
$$
and using the large-order expansions in Abramowitz \& Stegun (1964), \S 9.3, we obtain, for
$a \in {\mathbb Z}_+$,
$$
\varrho (\lambda) \sim \varrho^\infty (\lambda)
- \frac{1}{a^2} \frac{\cos \beta}{32\pi \sin^3 \beta} \left ( 13 + 38 \cot^2 \beta + 25 \cot^4 \beta 
\right ) + O \left ( \frac{1}{a^4} \right ) \quad \text{as $a \rightarrow \infty$}\,.
$$
This expansion breaks down at $\lambda=4$; as $a$ increases,
$\varrho(\lambda)$ diverges to infinity there but tends to zero exponentially fast for $\lambda>4$.

The following computations were performed in multiple-precison floating-point
arithmetic with the {\tt MAPLE} software package;
the eigenvalues and eigenvectors of the matrix $\mathscr{J}_n$ were calculated
by using the {\tt Eigenvals} function, which implements the QR algorithm.
Figure \ref{integratedDOSFigure} corresponds to the case
where $\mu=\mu^a$ with
$a=8$; it illustrates the convergence of the counting measure $N_n$ to the
integrated density of states $N$ as $n$ increases--- thus confirming the validity
of our formula for $N$.

\begin{figure}[htbp]
\vspace{7.5cm} 
\includegraphics{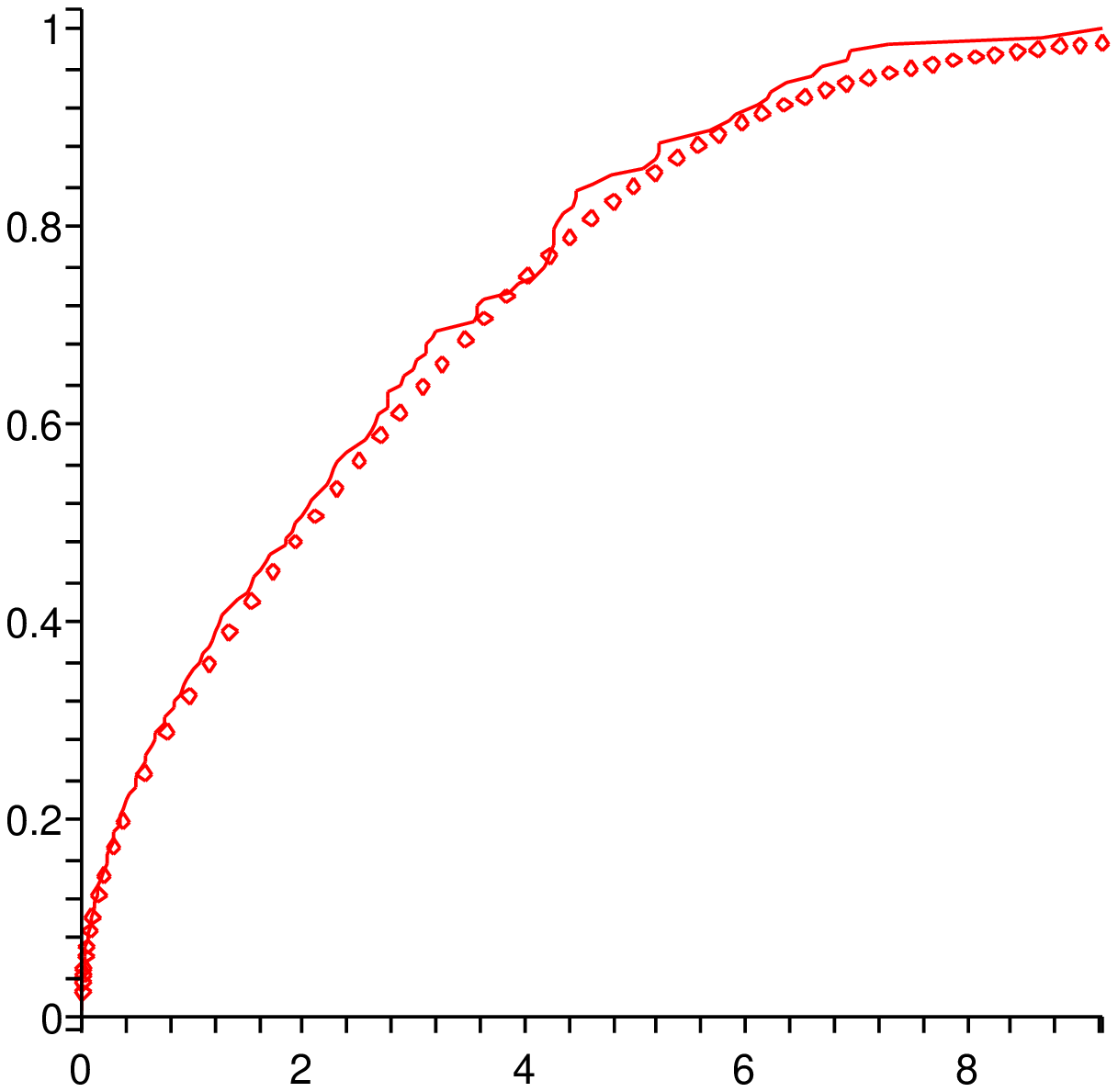}  
\includegraphics{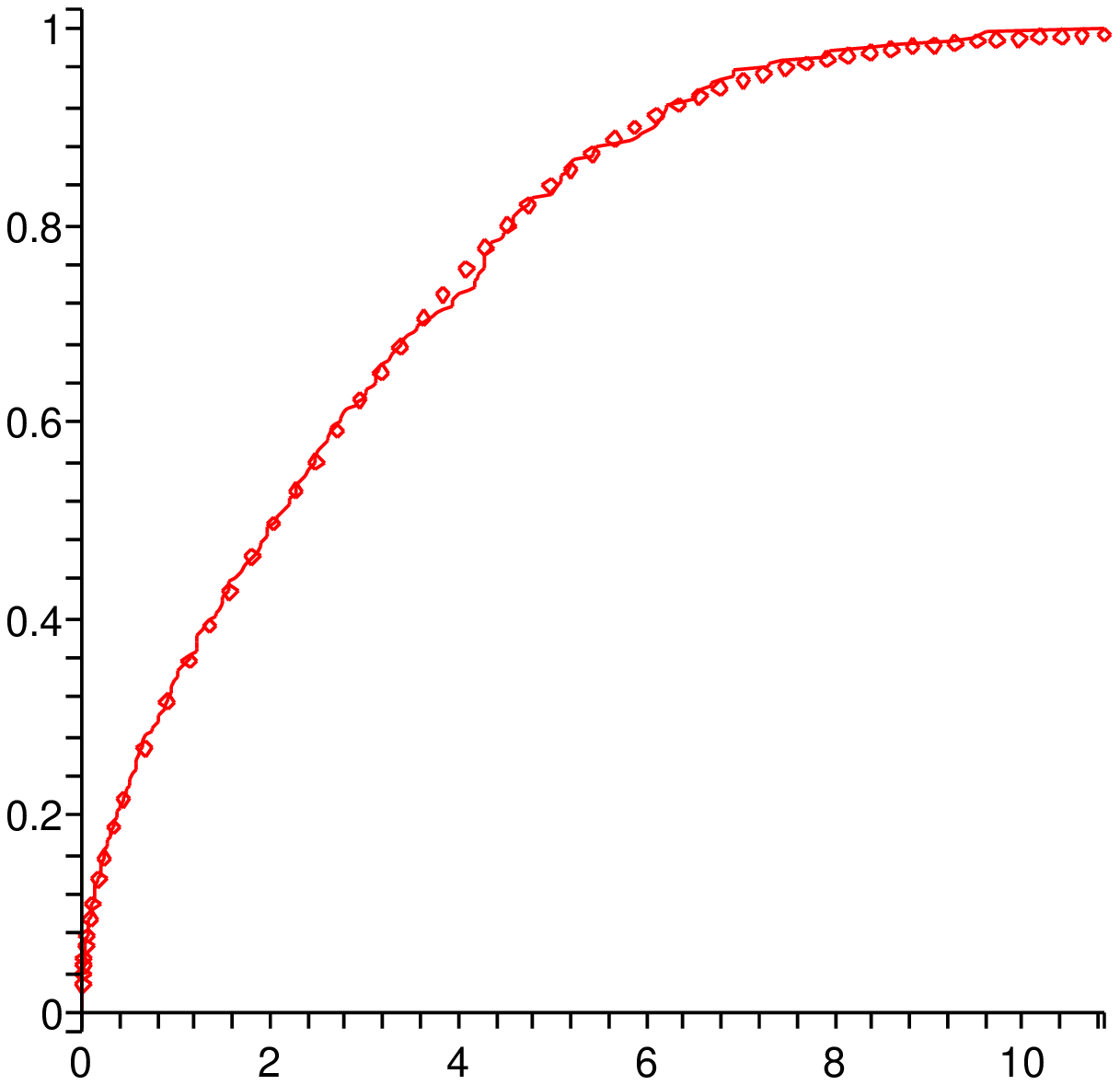} 
\begin{picture}(0,0) 
\put(-100,0){(i)}
\put(-60,10){$\lambda$} 
\put(100,0){(ii)} 
\put(140,10){$\lambda$}
\end{picture} 
\caption{The counting measure $N_n$ (solid line) for a particular realisation of the
sequence ${\mathbf s}$ when $\mu=\mu^a$ with $a=8$:
(i) $n=128$ and  (ii) $n=256$.
For comparison, points corresponding to values of the integrated density of states $N$
are also shown.}
\label{integratedDOSFigure} 
\end{figure}

While the Lyapunov exponent and the density of states
are non-random, the measure $\sigma$ {\em is} random.
We note that $\sigma^{\infty}$ is absolutely continuous whereas, for every
$a \in {\mathbb Z}_+$, almost every realisation of
$\sigma$ is singular. 
Figure \ref{singularMeasureFigure}
shows the approximation
$$
\int_0^{\lambda} \sigma_n (\rd \lambda')\,,
$$
of the integrated measure
for particular realisations corresponding to $a=8$ and $a=64$. 
Here $\sigma_n$ is the discrete measure defined by the quadrature formula
(\ref{quadratureFormula}), and $n=256$.

\begin{figure}[htbp]
\vspace{7.5cm} 
\includegraphics{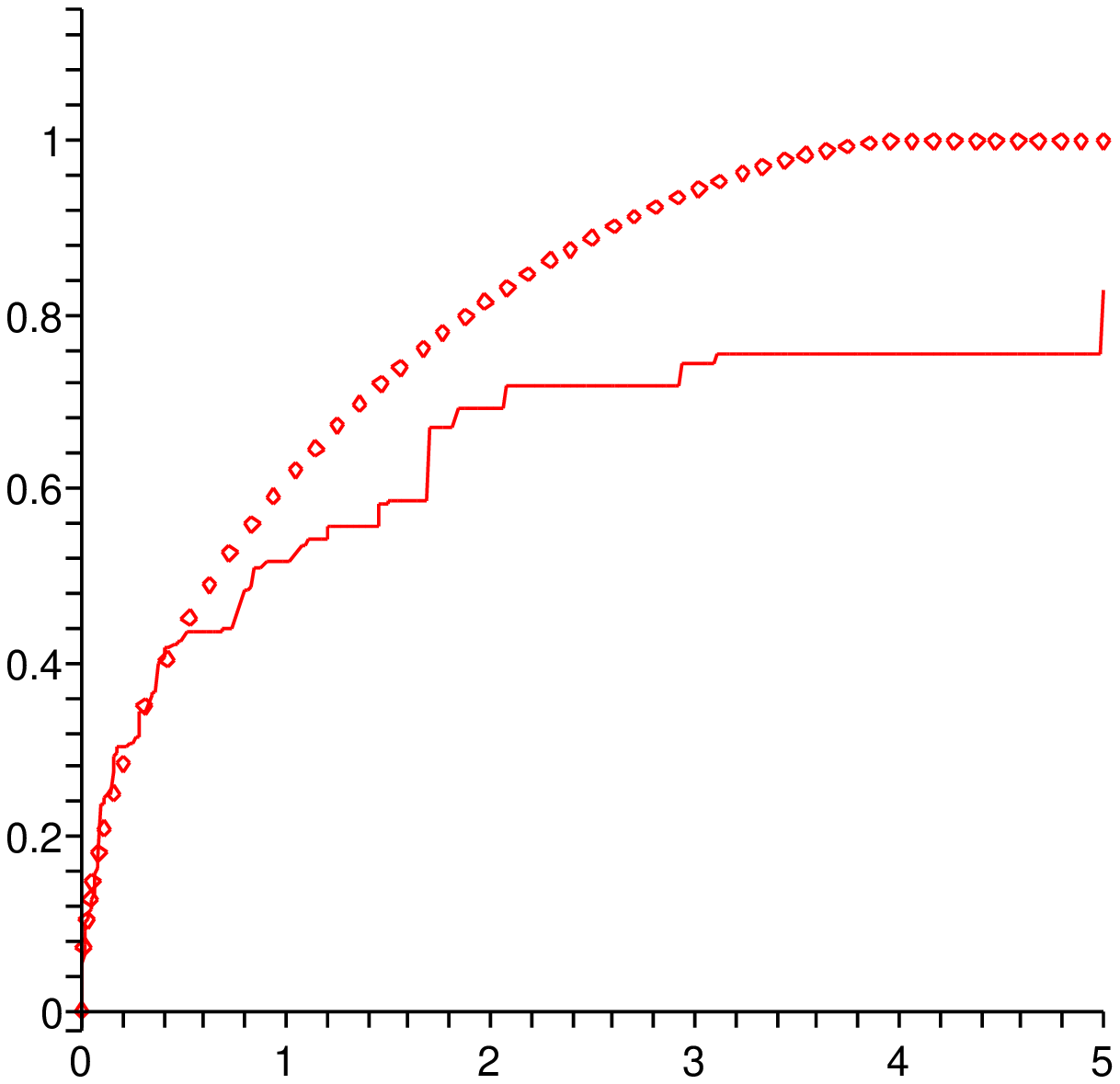}  
\includegraphics{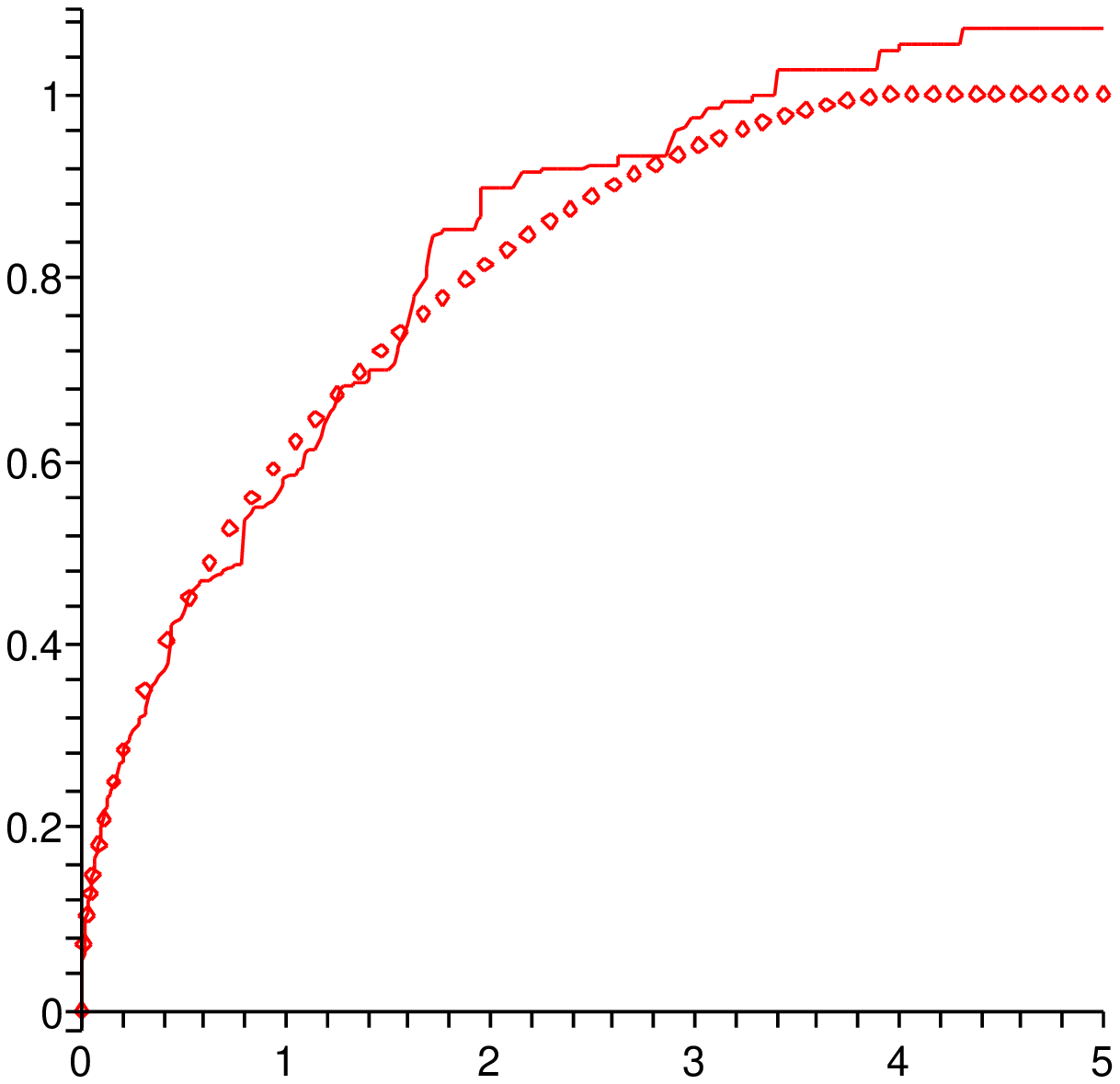}
\begin{picture}(0,0) 
\put(-100,0){(i)}
\put(-60,10){$\lambda$} 
\put(100,0){(ii)} 
\put(140,10){$\lambda$}
\end{picture}  
\caption{Plot (solid line) of 
$\int_0^\lambda \sigma_n (\rd \lambda')$, with $n=256$, corresponding to
particular realisations of ${\mathbf s}$
when $\mu=\mu^a$ where (i) $a=8$ and  (ii) $a=64$.
For comparison, points corresponding to values of the function $\int_0^\lambda \sigma^{\infty} (\rd \lambda')$
are also shown.}
\label{singularMeasureFigure} 
\end{figure}

\appendix

\section{Proof of proposition \ref{furstenbergProposition}}
\label{furstenbergAppendix}

As mentioned in the introduction, the theory of products of random matrices is a convenient
means of deducing the uniqueness of the invariant measure, as well as the positivity of the real part
of the complex Lyapunov exponent. For this purpose,
we shall have to deal with products of $2 \times 2$ matrices with real or complex entries;
so, in the following, ${\mathbb K}$ will stand for either ${\mathbb R}$
or ${\mathbb C}$. Set 
$$
\overline{\mathbb K} := {\mathbb K} \cup \{ \infty \}\,.
$$
We define an equivalence relation in the set of nonzero vectors in ${\mathbb K}^2$ via
$$
\begin{pmatrix}
u \\
v 
\end{pmatrix}
\sim 
\begin{pmatrix}
u' \\
v' 
\end{pmatrix}
\quad \text{if $\exists \; w \in {\mathbb K} \backslash \{0\}$ such that}\;
\begin{pmatrix}
u \\
v 
\end{pmatrix}
= w
\begin{pmatrix}
u' \\
v' 
\end{pmatrix}\,.
$$
The set of the equivalence classes is called the {\em projective space} ${P}({\mathbb K}^2)$.
Let
$$
\left [ \begin{pmatrix}
u \\
v
\end{pmatrix}
\right ] \in P \left ( {\mathbb K}^2 \right )\,.
$$
We shall identify this equivalence class with 
$$
z = {\mathcal P} \left ( 
\left [ \begin{pmatrix}
u \\
v
\end{pmatrix} \right ]
\right )
:= \begin{cases}
u/v & \text{if $v \ne 0$} \\
\infty & \text{otherwise}
\end{cases} \in \overline{\mathbb K}\,.
$$

The results of Furstenberg \& Kesten (1960) and  Furstenberg (1963)
concern the
typical asymptotic behaviour of the product of independent, identically
distributed random elements of some group acting on a compact topological space. In our particular
context, the relevant group is the subgroup of $\text{GL} \left ( 2,{\mathbb K}^2 \right )$ consisting
of $2 \times 2$ matrices with determinant $\pm 1$, and the topological space is $P \left ( {\mathbb K}^2 \right )$.
The invertible matrices
\begin{equation}
\mathcal A = \begin{pmatrix}
a & b \\
c & d
\end{pmatrix}\,,
\label{randomMatrix}
\end{equation}
are drawn at random 
from a distribution which we shall denote by $\tilde{\mu}$. 
The action of
the matrix $\mathcal A$ on the projective space can be expressed as
\begin{equation}
{\mathcal A} \cdot \begin{pmatrix} z \\ 1 \end{pmatrix}
= \begin{pmatrix} {\mathcal F} (z) \\ 1 \end{pmatrix} \,,
\label{groupAction}
\end{equation}
where ${\mathcal F} :\; \overline{\mathbb K} \rightarrow \overline{\mathbb K}$ is the
linear fractional transformation defined by
\begin{equation}
{\mathcal F} (z) = \mathcal P \left ( \mathcal A \begin{pmatrix} z \\ 1 \end{pmatrix} \right )
= \begin{cases}
\frac{a z + b}{c z + d} & \text{if $z \in {\mathbb K}$ and $c z + d \ne 0$} \\
\infty & \text{if $c \ne 0$ and $z=-d/c$} \\
a/c & \text{if $z = \infty$ and $c \ne 0$} \\
\infty & \text{if $c=0$ and $z = \infty$}
\end{cases}
\,.
\label{linearFractionalTransformation}
\end{equation}
Thus, we have an obvious connection between products 
$$
{\mathcal U}_n := {\mathcal A}_n \cdots  {\mathcal A}_2 {\mathcal A}_{1}
$$
of the ${\mathcal A}_n$ and the
Markov chain $\hat{\mathbf Z}$ such that
\begin{equation}
\hat{Z}_0 = z, \quad \hat{Z}_{n+1} = {\mathcal F}_n \left ( \hat{Z}_n \right )\,.
\label{generalMarkovChain}
\end{equation}
Indeed,
\begin{equation}
\hat{Z}_n =  {\mathcal U}_n  \cdot \begin{pmatrix}
z \\ 1
\end{pmatrix}\,.
\label{markovChainAsProduct}
\end{equation}
We are interested in the rate of growth of this product
as $n \rightarrow \infty$; this is quantified by  
the number
\begin{equation}
\gamma_{\tilde{\mu}} := \lim_{n \rightarrow \infty} \frac{1}{n} {\mathbb E} \left ( \ln |
{\mathcal U}_n  | \right ),
\label{lyapunovExponentDefinition}
\end{equation}
where $| \cdot |$ denotes some matrix norm.
This limit exists whenever 
\begin{equation}
{\mathbb E}(\log^+|{\mathcal A}|)<\infty\,.
\label{lyapunovCondition}
\end{equation}
In the literature on products of random matrices, the name ``Lyapunov exponent''
is usually reserved for $\gamma_{\tilde{\mu}}$; as we shall see, it is in fact
the real part of the complex Lyapunov exponent introduced earlier.

The following specialisation of Furstenberg's theorem will be the most useful for our purpose:
\begin{theorem}
Let the ${\mathcal A}_n$ be independent $2 \times 2$ random matrices with determinant $\pm 1$
and a common distribution $\tilde{\mu}$.
Suppose that there is no measure on $P \left ( {\mathbb K}^2 \right )$ that is invariant under the action of
the smallest subgroup of $\text{\em GL}(2,{\mathbb K}^2)$ generated by the support of $\tilde{\mu}$. Then
\begin{enumerate}
\item there 
exists a unique probability measure $\tilde{\nu}_{\tilde{\mu}}$ on $P \left ( {\mathbb K}^2 \right )$ that is invariant under the action
of matrices in the support of $\tilde{\mu}$. 
\item Let $z \in {\mathbb K}$. Then, for almost every realisation 
of the sequence $\{ \mathcal A_n \}_{n \in \mathbb N}$,
$$
\lim_{n \rightarrow \infty} \frac{1}{n} \ln  \left | {\mathcal U}_n \cdot \begin{pmatrix} z \\ 1 \end{pmatrix} \right |  
= \gamma_{\tilde{\mu}}\,.
$$
\item If condition (\ref{lyapunovCondition}) holds, then the number $\gamma_{\tilde{\mu}}$
is strictly positive and is given by the formula
\begin{equation}
\notag
\gamma_{\tilde{\mu}}
= \int_{P \left ( {\mathbb K}^2 \right )}
\int_{\text{\em GL} (2, \mathbb K)} \ln 
\frac{\left | \mathcal A \begin{pmatrix} z \\ 1 \end{pmatrix} \right |}{\left | \begin{pmatrix} z \\ 1 \end{pmatrix} \right |} \,
\tilde{\mu} \left ( \text{\em d} \mathcal A \right )
\,\tilde{\nu}_{\tilde{\mu}} \left ( \text{\em d} \begin{pmatrix} z \\ 1 \end{pmatrix} \right )\,.
\end{equation}
\end{enumerate}
\label{furstenbergTheorem}
\end{theorem}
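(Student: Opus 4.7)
The statement is the specialisation of the classical theorem of Furstenberg (1963) and Furstenberg–Kesten (1960) to $2\times 2$ matrices of determinant $\pm 1$, so the plan is to recover the three assertions by the standard route: existence and uniqueness of a stationary distribution on the compact space $P(\mathbb K^2)$, almost-sure convergence of the normalised log-norm along orbits via a subadditive/ergodic argument, and Furstenberg's positivity criterion for the Lyapunov exponent.

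For assertion (1), existence of a $\tilde\mu$-stationary probability measure $\tilde\nu_{\tilde\mu}$ on $P(\mathbb K^2)$ follows from a Krylov--Bogoliubov argument, using compactness of $P(\mathbb K^2)$ and continuity of the action $\mathcal A\cdot z = \mathcal F(z)$ defined by (\ref{linearFractionalTransformation}). For uniqueness, the plan is to invoke Furstenberg's contraction lemma: the hypothesis that no measure on $P(\mathbb K^2)$ is invariant under the smallest subgroup generated by $\text{supp}(\tilde\mu)$ rules out the two obstructions in $\text{GL}(2,\mathbb K)$ (reducibility and relative compactness modulo scalars), which together imply that for $\tilde\mu^{\mathbb N}$-a.e.\ realisation and for every pair $z,z'\in P(\mathbb K^2)$ the projective distance between $\mathcal U_n\cdot z$ and $\mathcal U_n\cdot z'$ tends to zero. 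Two distinct stationary measures would then couple into a stationary measure on $P(\mathbb K^2)^2$ concentrated on the diagonal, forcing them to agree. For assertion (2), I would apply Kingman's subadditive ergodic theorem to $f_n=\ln|\mathcal U_n|$, whose submultiplicativity and the moment bound (\ref{lyapunovCondition}) yield a.s. convergence of $f_n/n$ to $\gamma_{\tilde\mu}$. To replace $|\mathcal U_n|$ by $|\mathcal U_n\cdot(z,1)^T|$ for a fixed $z\in\mathbb K$, write
$$
\ln \left|\mathcal U_n \begin{pmatrix} z \\ 1 \end{pmatrix}\right| \;=\; \sum_{k=1}^n \ln \frac{\left|\mathcal A_k v_{k-1}\right|}{|v_{k-1}|}\,, \quad v_0=\begin{pmatrix} z \\ 1 \end{pmatrix}\,,\;\; v_k=\mathcal A_k v_{k-1}\,,
$$
and recognise the right-hand side as a Birkhoff sum for the shift on $(\text{supp}\tilde\mu)^{\mathbb N}\times P(\mathbb K^2)$ equipped with $\tilde\mu^{\mathbb N}\otimes\tilde\nu_{\tilde\mu}$. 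Uniqueness of $\tilde\nu_{\tilde\mu}$ yields ergodicity of this skew product, so Birkhoff's theorem delivers both convergence to the constant $\gamma_{\tilde\mu}$ and the integral formula in (3). Independence of the starting point $z$ is then a separate consequence of the contraction established in (1).

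The main obstacle will be the strict positivity of $\gamma_{\tilde\mu}$ claimed in (3); this is the deep part of Furstenberg's theorem. The plan is to argue by contradiction: if $\gamma_{\tilde\mu}=0$, then a refinement of the Birkhoff decomposition above shows that the additive cocycle $\sigma(\mathcal A,z)=\ln|\mathcal A\cdot z|/|z|$ is an $L^1$-coboundary with respect to $\tilde\nu_{\tilde\mu}$, i.e.\ $\sigma(\mathcal A,z)=\varphi(\mathcal A\cdot z)-\varphi(z)$ for some measurable $\varphi$. Exponentiating $\varphi$ then produces a density with respect to $\tilde\nu_{\tilde\mu}$ whose push-forward under any element of $\text{supp}(\tilde\mu)$ is itself, and by group closure under the subgroup generated by $\text{supp}(\tilde\mu)$; this contradicts the hypothesis that no measure on $P(\mathbb K^2)$ is invariant under that subgroup. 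Rather than reconstruct this cohomological analysis, I would cite the complete treatment in Bougerol \& Lacroix (1985), which is listed in the bibliography and covers precisely the $2\times 2$ case at hand.
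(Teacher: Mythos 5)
Your proposal is correct and ultimately takes the same route as the paper: the paper proves this theorem purely by citation to Bougerol \& Lacroix (1985), Part A (theorem 4.4, p.~32, for the unique invariant measure and the almost-sure limit, and theorem 3.6, p.~27, for the integral formula and positivity), which is precisely the reference you fall back on. The extra machinery you sketch (Krylov--Bogoliubov, contraction for uniqueness, Kingman plus Birkhoff on the skew product) is the standard background and is fine as a sketch, although your coboundary heuristic for the strict positivity of $\gamma_{\tilde{\mu}}$ would not stand on its own as written (a zero exponent does not directly make the norm cocycle an $L^1$-coboundary); since you explicitly defer that step to Bougerol \& Lacroix, exactly as the paper does, nothing essential is missing.
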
 

\begin{proof}
See Bougerol \& Lacroix (1985), Part A, theorem 4.4 on p. 32. The last formula appears in Part A, theorem 3.6 on p. 27
of the same reference.
\end{proof}

\begin{corollary}
Let $t \in {\mathbb C} \backslash {\mathbb R}_-$. Let the $s_n$ be independent 
random variables in ${\mathbb R}_+$ with a common distribution $\mu$
whose support contains at least two points. Set
$$
{\mathcal A}_n = \begin{pmatrix}
0 & 1 \\
1 & \frac{s_{n}}{\sqrt{t}} 
\end{pmatrix}\,.
$$
Then $\mu$ induces a probability distribution $\tilde{\mu}$
on $\text{\em GL}(2,{\mathbb C})$ such that
the hypothesis of theorem \ref{furstenbergTheorem} is fulfilled, and the image $\nu_{\mu}$
of the invariant measure $\tilde{\nu}_{\tilde{\mu}}$ under the map $\mathcal P$ is the unique stationary
distribution for the Markov chain $\hat{\mathbf Z}$. Furthermore, if we also assume that
$$
\int_{{\mathbb R}_+} s^\varepsilon\,\mu (\text{\em d} s) < \infty
$$
for some $\varepsilon > 0$, then we have the formula
$$
\gamma_{\tilde{\mu}}
= - \int_{\mathbb C} \ln |z|\,\nu_{\mu} (\text{\em d} z)\,.
$$
\label{furstenbergCorollary}
\end{corollary}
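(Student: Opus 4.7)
The plan is to apply Theorem \ref{furstenbergTheorem} to the image distribution $\tilde{\mu}$ on $\text{GL}(2,\mathbb{C})$ induced by $s \mapsto \mathcal{A}(s)$, and then translate its conclusions through the projection $\mathcal{P}$ to obtain the corresponding statements for the Markov chain $\hat{\mathbf{Z}}$ and the Lyapunov exponent.

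First I would verify Furstenberg's hypothesis. Pick $s_a \neq s_b$ in $\text{supp}(\mu)$, with associated matrices $\mathcal{A}_a, \mathcal{A}_b \in \text{supp}(\tilde{\mu})$, and let $G$ be the subgroup they generate. The Möbius maps $\mathcal{F}_j(z) = 1/(z+s_j/\sqrt{t})$ ($j = a,b$) have fixed-point pairs given by the roots of $z^2 + (s_j/\sqrt{t})z - 1 = 0$, and these pairs are distinct whenever $s_a \neq s_b$. A putative probability measure $m$ on $P(\mathbb{C}^2)$ invariant under $G$ would have its support invariant under both $\mathcal{F}_a$ and $\mathcal{F}_b$, hence contained in their common fixed set. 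Iterating words such as $\mathcal{A}_a^k \mathcal{A}_b$, however, produces infinitely many further distinct Möbius transformations with distinct fixed points, forcing the $G$-invariant support to be an infinite set acted on without finite invariant subset, in contradiction with $m$ being a probability measure.

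Part (i) of Theorem \ref{furstenbergTheorem} then supplies a unique invariant $\tilde{\nu}_{\tilde{\mu}}$ on $P(\mathbb{C}^2)$. Setting $\nu_\mu := \mathcal{P}_* \tilde{\nu}_{\tilde{\mu}}$ and noting that $\mathcal{P}$ intertwines the projective action (\ref{groupAction}) with the Möbius dynamics (\ref{linearFractionalTransformation}) that drives $\hat{\mathbf{Z}}$, we deduce that $\nu_\mu$ is stationary; conversely any stationary law on $\overline{\mathbb{C}}$ lifts to an invariant measure on $P(\mathbb{C}^2)$, so uniqueness transfers. For the Lyapunov formula, the moment bound combined with the elementary inequality $\log^+(s) \leq s^\varepsilon/\varepsilon$ and the estimate $\log^+|\mathcal{A}_n| \leq C + \log^+(s_n)$ gives $\mathbb{E}(\log^+|\mathcal{A}|) < \infty$, so part (iii) of Theorem \ref{furstenbergTheorem} applies. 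With $v = (z,1)^{\mathrm{T}}$, $w = s/\sqrt{t}$ and $z' = \mathcal{F}(z) = 1/(z+w)$, a direct computation yields the cocycle identity
$$
\ln \frac{|\mathcal{A} v|}{|v|} = -\ln|z'| + f(z') - f(z), \qquad f(z) := \tfrac{1}{2} \ln(1+|z|^2).
$$
Substituting this into Furstenberg's integral formula and exploiting the invariance of $\nu_\mu$ to cancel the $f$-telescope delivers $\gamma_{\tilde{\mu}} = -\int_{\mathbb{C}} \ln|z|\,\nu_\mu(\rd z)$.

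I expect the main obstacle to be this last cancellation, which is only immediate when $f \in L^1(\nu_\mu)$; since $f(z) \sim \ln|z|$ at infinity, a priori $\nu_\mu$ could have heavy tails that prevent termwise integration. I would circumvent the issue by working at the level of partial sums: writing the cocycle along an orbit $\hat{Z}_0,\hat{Z}_1,\ldots,\hat{Z}_n$ gives
$$
\frac{\ln|\mathcal{U}_n v|}{n} = -\frac{1}{n}\sum_{k=1}^n \ln|\hat{Z}_k| + \frac{f(\hat{Z}_n) - f(\hat{Z}_0)}{n},
$$
and then applying Birkhoff's ergodic theorem to the left-hand side (which tends to $\gamma_{\tilde{\mu}}$) and to the main sum on the right (which tends to $-\int \ln|z|\,\nu_\mu(\rd z)$), while controlling the boundary terms $f(\hat{Z}_n)/n$ by a Borel--Cantelli / truncation argument on the stationary distribution, yields the formula without requiring $L^1$-integrability of $f$ outright.
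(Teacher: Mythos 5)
The paper itself proves this corollary by citation (``same as Carmona \& Lacroix (1990), corollary IV.4.26''), so you are in effect reconstructing that argument, and your overall route --- verify the hypothesis of theorem \ref{furstenbergTheorem} for the induced $\tilde{\mu}$, transport the invariant measure through $\mathcal P$, then obtain the formula from Furstenberg's integral formula via the cocycle identity $\ln(|\mathcal A v|/|v|) = -\ln|z'| + f(z')-f(z)$ --- is the standard one; the cocycle identity and the telescoped display are correct. The genuine gap is in your verification of Furstenberg's hypothesis. From the invariance of $\operatorname{supp} m$ under $\mathcal F_a$ and $\mathcal F_b$ you infer that $\operatorname{supp} m$ is contained in their common fixed set; this inference is invalid: an invariant closed set of a M\"obius map need not consist of fixed points, and even an invariant \emph{probability} measure need not be carried by the fixed points when the map is elliptic (a rotation preserves the uniform measure on each invariant circle). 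Your closing sentence --- that an infinite support with no finite invariant subset contradicts $m$ being a probability measure --- is likewise not a valid deduction, since compact groups of M\"obius transformations admit invariant probability measures with infinite support. To repair the step you must either rule out ellipticity (for $\mathcal A(s)$ one has $\operatorname{tr}^2/\det = -s^2/t$, which avoids $[0,4)$ precisely because $\Real(s/\sqrt t)>0$ for $t\in{\mathbb C}\setminus{\mathbb R}_-$; note this is exactly what fails on the cut $t=-x+\ri 0\pm$, so the point is not cosmetic), or, more simply, observe that
$$
\mathcal A_a \mathcal A_b^{-1} = \begin{pmatrix} 1 & 0 \\ (s_a-s_b)/\sqrt t & 1 \end{pmatrix}
$$
is a nontrivial unipotent element, whose only invariant probability measure on $P({\mathbb C}^2)$ is the point mass at $z=0$; since $\mathcal F_a(0)=\sqrt t/s_a \ne 0$, no measure invariant under the generated group can exist.

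A secondary weakness concerns the final limit. Your telescoping does remove the need to split Furstenberg's double integral, but the Birkhoff/Breiman step applied to $\frac1n\sum_{k\le n}\ln|\hat Z_k|$, and the dismissal of $f(\hat Z_n)/n$, both still require $\ln|z|$ (equivalently $f$, since $f(z)\sim\ln|z|$ at infinity) to be $\nu_\mu$-integrable; so your ``circumvention'' merely relocates the integrability question rather than resolving it, and the sketched Borel--Cantelli/truncation argument does not by itself supply the needed tail control of the stationary law near $0$ and $\infty$ under the sole hypothesis $\int s^\varepsilon\,\mu(\rd s)<\infty$. This is precisely the technical content that the paper outsources to Carmona \& Lacroix, corollary IV.4.26, and it should either be proved (e.g.\ by estimating $\ln^{\pm}|\hat Z_k|$ using the recursion $\hat Z_k = 1/(\hat Z_{k-1}+s_k/\sqrt t)$ and the fact that the chain lives in the closed sector $|\arg z|\le|\arg t|/2$) or explicitly cited.
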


\begin{proof}
Same as that of Carmona \& Lacroix (1990), corollary IV.4.26. 
\end{proof}

\begin{corollary}
Let $u_n$ be the sequence defined by the recurrence relation
(\ref{differenceEquation}) with $u_0 \ne 0$. Under the same hypothesis on $\mu$,
for every $t \in {\mathbb C} \backslash {\mathbb R}_-$, for almost every realisation of the sequence
${\mathbf s}$, we have
$$
\lim_{n \rightarrow \infty} \frac{\ln u_n}{n} = \Lambda_{\mu} (t)
$$
independently of the starting value $u_0$. 
\label{markovChainCorollary}
\end{corollary}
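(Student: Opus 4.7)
My strategy is to express $u_n$ as a telescoping product of iterates of the Markov chain $\hat{\mathbf Z}$, and then apply the pointwise ergodic theorem for this chain, using the uniqueness of the invariant measure established in corollary~\ref{furstenbergCorollary}.

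Fix $t \in {\mathbb C}\setminus{\mathbb R}_-$ and arbitrary complex values $u_{-1}$, $u_0$ with $u_0 \ne 0$. Dividing the recurrence (\ref{differenceEquation}) through by $u_n$ and rearranging yields
$$
\hat Z_{n+1} = \frac{1}{s_{n+1}/\sqrt{t} + \hat Z_n}\,,
$$
so $\hat{\mathbf Z}$ is precisely the Markov chain (\ref{markovChain}), driven by the linear fractional transformations associated with the matrices ${\mathcal A}_n$ via (\ref{linearFractionalTransformation}). Telescoping $\hat Z_j = u_{j-1}/u_j$ gives $\prod_{j=1}^n \hat Z_j = u_0/u_n$, whence the first equality in (\ref{markovChainTheorem}):
$$
\frac{\ln u_n}{n} = \frac{\ln u_0}{n} - \frac{1}{n}\sum_{j=1}^n \ln \hat Z_j\,.
$$
It therefore suffices to prove that the Ces\`aro average on the right converges almost surely to $-\Lambda_\mu(t)$, independently of $\hat Z_0 = u_{-1}/u_0$.

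By corollary~\ref{furstenbergCorollary}, the induced distribution $\tilde\mu$ satisfies the hypotheses of theorem~\ref{furstenbergTheorem}. Consequently, the chain $\hat{\mathbf Z}$ admits a unique stationary distribution $\nu_\mu$ on $\overline{\mathbb C}$, and the finiteness of the real Lyapunov exponent
$$
\gamma_{\tilde\mu} = -\int_{\mathbb C} \ln|z|\,\nu_\mu(\rd z) = \Real\,[\Lambda_\mu(t)]
$$
shows that $\ln |z|$ is $\nu_\mu$-integrable. Since $|\Imag \ln z|\le\pi$, the complex logarithm $\ln z$ itself is also $\nu_\mu$-integrable, with $\int_{\mathbb C} \ln z\,\nu_\mu(\rd z) = -\Lambda_\mu(t)$. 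The strong law for Markov chains with a unique invariant distribution (Breiman 1960; Furstenberg 1963; Meyn \& Tweedie 1993) then delivers, for every starting value and almost every realisation of $\mathbf s$,
$$
\frac{1}{n}\sum_{j=1}^n \ln \hat Z_j \xrightarrow[n \rightarrow \infty]{} \int_{\mathbb C} \ln z\,\nu_\mu(\rd z) = -\Lambda_\mu(t)\,.
$$
Since $\ln u_0/n\to 0$ and the ergodic limit is insensitive to $\hat Z_0$, the stated conclusion follows.

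\textbf{Main obstacle.} The crucial point is the $\nu_\mu$-integrability of $\ln z$: without it the pointwise ergodic theorem could fail and even the very definition of $\Lambda_\mu(t)$ would be suspect. Integrability of the imaginary part is automatic since $|\arg z|\le\pi$, so the real issue is $\nu_\mu$-integrability of $\ln|z|$. This is precisely what the $\varepsilon$-moment assumption on $\mu$ buys us through Furstenberg's theorem and its consequences in corollary~\ref{furstenbergCorollary}. Once this integrability is in hand, the remainder of the argument is essentially bookkeeping on the telescoping identity above.
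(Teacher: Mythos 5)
Your argument is correct and is essentially the paper's own proof: the paper likewise deduces the corollary from the telescoping identity $\ln u_n/n = \ln u_0/n - n^{-1}\sum_{j=1}^n \ln \hat Z_j$, the definition of $\Lambda_\mu$, and the uniqueness of the stationary measure from corollary \ref{furstenbergCorollary}, invoking the same strong law for Markov chains (Breiman; Furstenberg; Meyn \& Tweedie). Your added remark on the $\nu_\mu$-integrability of $\ln z$ only makes explicit a point the paper leaves implicit.
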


\begin{proof}
This follows from the equality
$$
\frac{\ln u_n}{n}  = \frac{\ln u_0}{n} - \frac{1}{n} \sum_{j=1}^n \ln \hat{Z}_j\,,
$$
the definition of $\Lambda_{\mu}$,
and the uniqueness of the stationary measure for the Markov chain $\hat{\mathbf Z}$.
\end{proof}

For reasons that will become clear when we discuss the nature of the spectrum of the
measure $\sigma$, we also need to consider the case where 
$$
t = -x + \ri 0 \pm\,, \quad x > 0\,.
$$
In this case,
$$
{\mathcal A}_n = \begin{pmatrix}
0 & 1 \\
1 & \mp \ri \frac{s_n}{\sqrt{x}} 
\end{pmatrix}
$$
and so the recurrence defining the Markov chain $\hat{\mathbf Z}$ is
$$
\hat{Z}_{n+1} = \frac{1}{\hat{Z}_n + \mp \ri \frac{s_{n+1}}{\sqrt{x}}}\,.
$$
Once again, theorem \ref{furstenbergTheorem} implies the existence of a unique
stationary measure $\mu_{\nu}$. We remark that
this stationary measure is concentrated on the imaginary axis. Indeed,
define $\hat{Y}_n$ via
$$
\hat{Z}_{n} = \ri \hat{Y}_n\,.
$$
Then 
$$
\hat{Y}_{n+1} = \frac{-1}{\hat{Y}_n + \mp \frac{s_{n+1}}{\sqrt{x}}}\,.
$$
The resulting Markov chain $\hat{\mathbf Y}= (\hat{Y}_0\,,\hat{Y}_1\,, \ldots )$ 
corresponds to the product of the real (Schr\"{o}dinger)
matrices
$$
\begin{pmatrix}
0 & -1 \\
1 & \frac{s_{n}}{\sqrt{x}} 
\end{pmatrix}\,.
$$
By applying theorem \ref{furstenbergTheorem} with ${\mathbb K} = {\mathbb R}$, we deduce
that $\hat{\mathbf Y}$ has a unique stationary measure supported on ${\mathbb R}$. The fact
that the measure $\mu_{\nu}$ must be concentrated
on the imaginary axis then follows by uniqueness.

\begin{corollary}
Let $u_n$ be the sequence defined by the recurrence relation
(\ref{differenceEquation}) with $u_0 \ne 0$ and $t = -x + \ri 0 \pm$. Under the same hypothesis on $\mu$,
for every $x \in {\mathbb R}_+$, for almost every realisation of the sequence
${\mathbf s}$, we have
$$
\lim_{n \rightarrow \infty} \frac{\ln u_n}{n} = \Lambda_{\mu} ( -x + \ri 0 \pm  )
$$
independently of the starting value $u_0$.
\label{markovChainCorollaryOnTheNegativeAxis}
\end{corollary}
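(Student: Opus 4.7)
The plan is to follow essentially the same argument as in Corollary \ref{markovChainCorollary}, but with the Furstenberg machinery now applied to a product of \emph{real} matrices rather than complex ones, exploiting the observation (made in the paragraph preceding the statement) that after the substitution $\hat{Z}_n = \ri \hat{Y}_n$ the chain $\hat{\mathbf{Z}}$ is conjugate to a real Schr\"odinger-type Markov chain $\hat{\mathbf{Y}}$ on $\overline{\mathbb{R}}$.

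Starting from the telescoping identity
$$
\frac{\ln u_n}{n} = \frac{\ln u_0}{n} - \frac{1}{n} \sum_{j=1}^n \ln \hat{Z}_j,
$$
the first term on the right tends to zero, so it suffices to prove that the Ces\`aro average in the second term converges almost surely, independently of $\hat{Z}_0$, to $-\Lambda_{\mu}(-x+\ri 0\pm)$. Theorem \ref{furstenbergTheorem}, applied with $\mathbb{K} = \mathbb{R}$ to the Schr\"odinger matrices associated with $\hat{\mathbf{Y}}$, furnishes a unique invariant probability measure on $P(\mathbb{R}^2)$; transporting it back through $z = \ri y$ gives the unique stationary measure $\nu_\mu$ of $\hat{\mathbf{Z}}$, concentrated on the imaginary axis, as already noted in the text. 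The irreducibility needed to invoke Furstenberg's theorem is guaranteed by the hypothesis that $\mu$ has at least two points of increase, exactly as in the proof of Corollary \ref{markovChainCorollary}.

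Once uniqueness of $\nu_\mu$ is in hand, the Birkhoff ergodic theorem applied to the shift on the sequence space of $\mathbf{s}$ yields
$$
\frac{1}{n} \sum_{j=1}^n \ln \hat{Z}_j \longrightarrow \int_{\mathbb C} \ln z\, \nu_\mu(\rd z) = -\Lambda_{\mu}(-x+\ri 0\pm),
$$
and independence from the initial value $u_0 \ne 0$ is obtained by the same forward/backward coupling argument invoked for Corollary \ref{markovChainCorollary}: one compares the backward iterates started from $\hat{Z}_0 = u_{-1}/u_0$ with those started from $\hat{Z}_0 = 0$, and uses the contraction property implied by the positivity of $\gamma_{\tilde{\mu}}$ in Theorem \ref{furstenbergTheorem}.

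The main technical obstacle is to verify that $\ln z$ is actually integrable against the invariant measure $\nu_\mu$, since a priori $\hat{Z}_j$ can be arbitrarily close to $0$ or $\infty$. This integrability is the analogue of the bound used in Carmona \& Lacroix (1990, Corollary IV.4.26) for the Schr\"odinger case, and it is here that the $\varepsilon$-moment hypothesis $\int s^{\varepsilon}\,\mu(\rd s) < \infty$ enters: combined with the explicit form of the transition $\hat{Y}_{n+1} = -1/(\hat{Y}_n \mp s_{n+1}/\sqrt{x})$, it gives the required two-sided tail control on $\nu_\mu$ near $0$ and $\infty$ and hence the applicability of the ergodic theorem to the unbounded observable $\ln|\hat Z_j|$. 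This is the only step that does not carry over verbatim from Corollary \ref{markovChainCorollary}, and once it is disposed of, the statement of the corollary follows immediately.
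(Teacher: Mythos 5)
Your argument follows essentially the same route as the paper: the telescoping identity for $\ln u_n/n$, uniqueness of the stationary measure obtained by applying Theorem \ref{furstenbergTheorem} with $\mathbb{K}=\mathbb{R}$ to the conjugated Schr\"odinger chain $\hat{\mathbf{Y}}$ and transporting back via $z=\ri y$, and the strong law of large numbers for Markov chains, with the integrability of $\ln|z|$ against $\nu_\mu$ delegated to the Carmona--Lacroix argument already cited for Corollary \ref{furstenbergCorollary}. The paper in fact states Corollary \ref{markovChainCorollaryOnTheNegativeAxis} without a separate written proof, the preceding paragraph having supplied exactly the same ingredients you assemble, so your reconstruction matches the intended argument.
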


Finally, the positivity of $\gamma_{\tilde{\mu}}$, asserted in theorem \ref{furstenbergTheorem},
leads to the
\begin{corollary}
Under the same hypothesis on $\mu$, for every $t \in {\mathbb C} \backslash {\mathbb R}_-$,
$$
\Real \left [ \Lambda_{\mu} (t) \right ] > 0\,.
$$
Also, for every $x \in {\mathbb R}_+$,
$$
\Real \left [ \Lambda_{\mu} (-x + \ri 0 \pm) \right ] > 0\,. 
$$
\label{positivityCorollary}
\end{corollary}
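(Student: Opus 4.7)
The plan is to derive both positivity statements from the strict positivity of the matrix Lyapunov exponent $\gamma_{\tilde{\mu}}$ asserted in part (3) of theorem \ref{furstenbergTheorem}, combined with the identification $\gamma_{\tilde{\mu}} = \Real[\Lambda_{\mu}(t)]$ supplied by corollary \ref{furstenbergCorollary}. So, for fixed $t \in {\mathbb C} \backslash {\mathbb R}_-$, I would first check that the hypotheses of theorem \ref{furstenbergTheorem} are satisfied by the induced distribution $\tilde\mu$ on the matrices ${\mathcal A}_n$: the non-existence of an invariant probability measure on $P({\mathbb C}^2)$ for the group generated by $\mathrm{supp}(\tilde{\mu})$ is already established inside corollary \ref{furstenbergCorollary} (and follows from $\mu$ having at least two points of increase, so that two non-commuting matrices lie in the support), and the integrability condition ${\mathbb E}(\log^+|{\mathcal A}|)<\infty$ follows from $\int s^{\varepsilon}\mu(\rd s)<\infty$ since $|{\mathcal A}_n|$ grows only linearly in $s_n$.

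With those hypotheses in place, theorem \ref{furstenbergTheorem}(3) directly gives $\gamma_{\tilde{\mu}}>0$, and then corollary \ref{furstenbergCorollary} yields
\[
\Real[\Lambda_{\mu}(t)] = -\int_{\mathbb C}\ln|z|\,\nu_{\mu}(\rd z) = \gamma_{\tilde{\mu}} > 0,
\]
which settles the first assertion.

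For the boundary case $t=-x+\ri 0\pm$ with $x>0$, the same strategy applies, but it is cleaner to work over ${\mathbb K}={\mathbb R}$. As noted in the paragraph preceding corollary \ref{markovChainCorollaryOnTheNegativeAxis}, the substitution $\hat Z_n = \ri \hat Y_n$ conjugates the complex Markov chain to one driven by the real Schr\"odinger matrices $\begin{pmatrix}0&-1\\1&s_n/\sqrt{x}\end{pmatrix}$, and the uniqueness part of theorem \ref{furstenbergTheorem} forces the stationary measure $\nu_{\mu}$ to be supported on $\ri{\mathbb R}$. I would then apply theorem \ref{furstenbergTheorem} with ${\mathbb K}={\mathbb R}$ to this real random-matrix model, check the two hypotheses exactly as before (the two-point support of $\mu$ produces non-commuting Schr\"odinger matrices, and the $\varepsilon$-moment bound gives the $\log^+$ integrability), and read off the positivity of the corresponding real Lyapunov exponent, which is again $\Real[\Lambda_{\mu}(-x+\ri 0\pm)]$ by corollary \ref{furstenbergCorollary}.

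The main obstacle is purely bookkeeping: one has to be careful that the "no invariant measure on $P({\mathbb K}^2)$" hypothesis of Furstenberg is genuinely verified for both the complex matrices ${\mathcal A}_n$ (for generic $t$) and the real Schr\"odinger matrices (for $t$ on the negative axis), rather than merely invoked. In both cases the key point is that $\mu$ has at least two points of increase $s\ne s'$, which produces two matrices in the support whose commutator is non-trivial, ruling out any common invariant probability measure on the projective line; this rigidity argument is standard and is precisely what underpins the proof of corollary \ref{furstenbergCorollary}, so once it is recorded the two positivity statements follow without further effort.
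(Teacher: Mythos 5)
Your argument is correct and is essentially the paper's own proof: the paper likewise takes the real part of the definition (\ref{characteristicExponent}), identifies $\Real[\Lambda_{\mu}(t)]=-\int_{\mathbb C}\ln|z|\,\nu_{\mu}(\rd z)=\gamma_{\tilde{\mu}}$ via corollary \ref{furstenbergCorollary}, and invokes the strict positivity of $\gamma_{\tilde{\mu}}$ from theorem \ref{furstenbergTheorem}(3), with the boundary case $t=-x+\ri 0\pm$ covered by the same reduction to the real Schr\"odinger matrices described before corollary \ref{markovChainCorollaryOnTheNegativeAxis}. The hypothesis-checking you spell out (non-existence of an invariant measure on the projective line, the $\log^{+}$ moment bound) is exactly what the paper delegates to corollary \ref{furstenbergCorollary} and its reference to Carmona \& Lacroix, so no new ingredient is needed.
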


\begin{proof}
Take the real part in equation (\ref{characteristicExponent}) and use corollary \ref{furstenbergCorollary}\,.
\end{proof}

The statements made in corollaries \ref{furstenbergCorollary}-\ref{positivityCorollary} are of the
form: 
\begin{quote}
{\em Let $t$ be fixed, then for almost every realisation of the sequence ${\mathbf s}$, etc.}
\end{quote}
But the proof of proposition \ref{furstenbergProposition} follows easily from them by
a well-known argument based on the use of Fubini's theorem; see, for instance, Ishii (1973).

\section{The complex Lyapunov exponent for the gamma distribution}
\label{characteristicExponentAppendix}
In this appendix, we derive the formula (\ref{characteristicExponentFormula})
for the complex Lyapunov exponent when $\mu$ is the gamma distribution.
For this purpose, it is convenient to 
adopt the notation used in Marklof {\em et al.} (2005); so we set
$$
a_n := \frac{s_n}{\sqrt{|t|}} \quad \text{and} \quad \alpha := -\frac{\arg{t}}{2} \in [-\pi/2,\,\pi/2]\,. 
$$
Then 
$$
Z = \cfrac{1}{a_1 \re^{\ri \alpha}+\cfrac{1}{a_2 \re^{\ri \alpha}+\cfrac{1}{a_3 \re^{\ri \alpha}+ \cdots}}}\,.
$$
The random variable $Z$ takes values in the set
$$
S_\alpha := \left \{ z \in \mathbb C\,:\; | \arg z | \le |\alpha| \right \}\,.
$$
In Marklof {\em et al.} (2005), we showed that, if the $a_n$ are gamma-distributed with parameters $p$ and $s$, then
the probability density function of $Z$ is given explicitly by
\begin{multline}
\label{densityFunction}
f_\alpha (z) = 
\frac{\sin(2 |\alpha|)}{\left | 2 K_p \left ( 2 \re^{\ri \alpha}/s \right ) \right |^{2}} 
\frac{1 }{r^2 \sin^2(\alpha+\theta)}
\left [ \frac{\sin(\alpha-\theta)}{\sin(\alpha+\theta)}
\right ]^{p-1} \\
\times \exp \left \{ -\frac{\sin(2 \alpha)}{s}
\left [ \frac{1}{r
\sin(\alpha-\theta)}+ \frac{r}{\sin(\alpha+\theta)} \right ] \right \}\,,
\end{multline}
where $r = |z|$ and $\theta = \arg z$. This is valid for $|\alpha | < \pi/2$--- which is all
we need for the purpose of calculating the complex Lyapunov exponent, since the cases 
$\alpha =\pm \pi/2$ may be obtained by letting $\alpha$ tend to the appropriate limit.
In Marklof {\em et al.} (2005), we derived the formula
$$
-\int_{S_\alpha} \ln |z| f_\alpha (z) \,\rd z = \Real \left [ \frac{\partial_p K_p \left (\frac{2}{s}
\re^{-\ri \alpha} \right )}{K_p \left (\frac{2}{s}
\re^{-\ri \alpha} \right )} \right ]\,.
$$
So there only remains to show that
\begin{equation}
-\int_{S_\alpha} \ln \left ( \re^{\ri \arg z}\right )\, f_\alpha (z) \,\rd z = \ri \Imag \left  [ \frac{\partial_p K_p \left (\frac{2}{s}
\re^{-\ri \alpha} \right )}{K_p \left (\frac{2}{s}
\re^{-\ri \alpha} \right )} \right ]\,.
\label{imaginaryPartFormula}
\end{equation}

We shall need the 
\begin{lemma}
Let $w$ and $W$ be two complex numbers with positive real part. Then
\begin{multline}
\notag
K_p(w) \,\partial_p K_p (W) 
= -\frac{1}{2} \int_{-\infty}^\infty \re^{-2 p u} 
\int_0^\infty \left ( u+\ln \frac{2v}{w \re^u + W \re^{-u}} \right )\\
\times \exp \left \{ -v - \frac{w^2+2 wW \cosh(2 u)+W}{4 v} \right \} \,\frac{\rd v}{v} \,\rd u\,.
\end{multline}
\label{besselLemma}
\end{lemma}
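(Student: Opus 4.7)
The plan is to start from the Schl\"afli--Macdonald representation
\[
K_p(z) = \frac{1}{2}\int_{-\infty}^\infty \re^{-z\cosh A + p A}\,\rd A,\qquad \Real z > 0,
\]
and its derivative $\partial_p K_p(W) = \frac{1}{2}\int_{-\infty}^\infty B\,\re^{-W\cosh B + p B}\,\rd B$, which is obtained by differentiating under the integral sign. Since both sides of the claimed identity are analytic in $w$ and $W$, it suffices to prove it under the stronger hypothesis $\Real w,\Real W > 0$, where everything converges absolutely and Fubini applies.

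Multiplying the two representations gives
\[
K_p(w)\,\partial_p K_p(W) = \frac{1}{4}\int_{-\infty}^\infty\!\int_{-\infty}^\infty B\,\re^{-w\cosh A - W\cosh B + p(A+B)}\,\rd A\,\rd B.
\]
The first change of variables is the linear substitution $A = \sigma - u$, $B = -\sigma - u$, with Jacobian $2$. It turns $p(A+B)$ into $-2pu$, turns $B$ into $-(u+\sigma)$, and leaves the cosh factors in the form $w\cosh(u-\sigma)+W\cosh(u+\sigma)$. Absorbing the sign produces
\[
K_p(w)\,\partial_p K_p(W) = -\frac{1}{2}\int_{-\infty}^\infty \re^{-2pu}\!\int_{-\infty}^\infty (u+\sigma)\,\re^{-w\cosh(u-\sigma)-W\cosh(u+\sigma)}\,\rd\sigma\,\rd u.
\]

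The second change of variables is, for each fixed $u$, the substitution $v = \tfrac{1}{2}(w\re^{-u}+W\re^u)\,\re^\sigma$, which maps $\mathbb R$ bijectively onto $(0,\infty)$ with $\rd\sigma = \rd v/v$ and $\sigma = \ln[2v/(w\re^{-u}+W\re^u)]$. Expanding hyperbolics and using the factorisation
\[
(w\re^u+W\re^{-u})(w\re^{-u}+W\re^u) = w^2 + 2wW\cosh(2u) + W^2,
\]
one checks that $w\cosh(u-\sigma)+W\cosh(u+\sigma)$ collapses to $v + (w^2+2wW\cosh(2u)+W^2)/(4v)$. Substituting back yields exactly the right-hand side of the lemma.

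The only real obstacle is bookkeeping: the precise positions of $w$ and $W$, and the choice of signs in the two substitutions, have to be consistent, otherwise one arrives at the neighbouring identity with $\partial_p K_p(w)\,K_p(W)$ on the left-hand side, or with $w\re^u+W\re^{-u}$ and $w\re^{-u}+W\re^u$ interchanged in the argument of the logarithm (the ambiguity is harmless, since the two logarithms differ by the $u$-odd quantity $\ln[(w\re^u+W\re^{-u})/(w\re^{-u}+W\re^u)]$, and one can show by a $v\mapsto c^2/(4v)$ reflection of the inner integral that their contributions are equal modulo the outer $\re^{-2pu}$ factor). Once the orientation is fixed, the proof is a direct computation.
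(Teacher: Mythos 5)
Your approach is identical to the paper's: multiply two Schl\"afli integrals, perform the $45^\circ$ rotation $(A,B)\mapsto(u,\sigma)$, then substitute exponentially in the inner variable. The main computation is correct, and the result you derive,
\[
K_p(w)\,\partial_p K_p(W)=-\frac12\int_{-\infty}^\infty \re^{-2pu}\int_0^\infty\Bigl(u+\ln\frac{2v}{w\re^{-u}+W\re^{u}}\Bigr)\,\re^{-v-\frac{w^2+2wW\cosh 2u+W^2}{4v}}\,\frac{\rd v}{v}\,\rd u,
\]
is the correct identity; it coincides, after the relabelling $w\leftrightarrow W$, with what the paper's own proof produces. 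In other words, the lemma as printed contains a typographical error (the left-hand side should read $K_p(W)\,\partial_p K_p(w)$ to match $\ln\tfrac{2v}{w\re^u+W\re^{-u}}$ on the right, or else the roles of $w$ and $W$ in the logarithm should be swapped; also $W$ in the exponent should be $W^2$). You were right to notice the mismatch.

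However, your final paragraph, which claims the two versions of the logarithm give the same answer, is wrong. Write $c_1=w\re^{-u}+W\re^u$, $c_2=w\re^u+W\re^{-u}$, $c^2=c_1c_2$. The reflection $v\mapsto c^2/(4v)$ preserves $\exp\{-v-c^2/(4v)\}\,\rd v/v$ but sends $\ln(2v/c_1)$ to $\ln(c_2/(2v))=-\ln(2v/c_2)$, so it shows that the two inner integrals are \emph{negatives} of one another, not equal. Likewise, the difference $\ln(c_1/c_2)$ is indeed odd in $u$, but pairing $u$ with $-u$ only kills the $\cosh(2pu)$ part of $\re^{-2pu}$; the $\sinh(2pu)$ part survives, so the two outer integrals genuinely differ for $p\neq 0$. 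There is no hidden symmetry here: the two placements of $\re^{\pm u}$ in the logarithm give different identities (related by $w\leftrightarrow W$), and one must simply keep the bookkeeping straight rather than appeal to an equivalence that does not hold. The correct resolution is to record the typo in the lemma statement, not to argue the two forms agree.

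One further small point: you say it "suffices to prove it under the stronger hypothesis $\Real w,\Real W>0$", but that is exactly the hypothesis already assumed; you presumably meant $w,W\in(0,\infty)$, in which case the inner exponential substitution $v=\tfrac12(w\re^{-u}+W\re^u)\re^\sigma$ really does land on $(0,\infty)$. For complex $w,W$ the substitution produces a ray in $\mathbb{C}$, and one needs an additional contour rotation (justified by the decay of the integrand) to recover $\int_0^\infty$; the paper glosses over this too, but it deserves at least a remark.
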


\begin{proof}
The Bessel function of the second kind has the integral representation (see Watson 1966, \S 6.22)
$$
K_p(z) = \frac{1}{2} \int_{-\infty}^\infty \re^{-z \cosh x - p x} \,\rd x\,.
$$
Hence
$$
K_p (W) \partial_p K_p (w) = -\frac{1}{4} \int_{-\infty}^\infty \int_{-\infty}^\infty
\exp \left \{-W \cosh y - p y - w \cosh x -p x \right \} \,x \,\rd x \rd y\,.
$$
We make the change of variables
$$
x = X+Y \quad \text{and} \quad y = X-Y\,.
$$
Then, after some re-arrangement,
\begin{multline}
\notag
K_p (W) \,\partial_p K_p (w) = - \frac{1}{2} \int_{-\infty}^\infty \re^{- 2 p X} \\
\times \int_{-\infty}^\infty \exp \left \{ - \re^Y \frac{W \re^{-X} + w \re^X}{2} - \re^{-Y}
\frac{W \re^X + w \re^{-X}}{2} \right \} (X+Y) \rd Y \rd X\,.
\end{multline}
Make the substitution
$$
v = \re^Y \frac{W \re^{-X}+w \re^X}{2}
$$
in the second integral. Then
\begin{multline}
\notag
K_p (W) \,\partial_p K_p (w) = - \frac{1}{2} \int_{-\infty}^\infty \re^{- 2 p X} \\
\times \int_{-\infty}^\infty \exp \left \{ -v - 
\frac{W^2+ 2 wW \cosh(2X) + w^2}{4 v} \right \} \left ( X+ \ln \frac{2v}{W \re^{-X}+w \re^X}
\right ) \frac{\rd v}{v} \rd X\,.
\end{multline}
The desired result follows after we set $X=u$ and take the imaginary part.
\end{proof}

Returning to the proof of equation (\ref{imaginaryPartFormula}), let us write
\begin{equation}
\Imag \int_{S_\alpha} \ln \left ( \re^{-\ri \arg z} \right ) \,f_\alpha (z) \,\rd z=
\frac{1}{4 \left | K_p \left ( \frac{2}{s} \re^{-\ri \alpha} \right ) \right |^2} \,{\mathcal I}\,,
\label{definitionOfI}
\end{equation}
where
\begin{multline}
\notag
{\mathcal I} :=  \Imag
\int_{-\alpha}^{\alpha} \int_0^\infty 
\ln \left ( \re^{-\ri \theta} \right )
\frac{\sin ( 2 |\alpha |)}{r^2 \sin^2(\alpha+\theta)}
\left [ \frac{\sin(\alpha-\theta)}{\sin(\alpha+\theta)}
\right ]^{p-1} \\
\times \exp \left \{ -\frac{\sin(2 \alpha)}{s}
\left [ \frac{1}{r
\sin(\alpha-\theta)}+ \frac{r}{\sin(\alpha+\theta)} \right ] \right \}\, \rd r \rd \theta\,.
\end{multline}

Replace the variable $r$ by
$$
v = \frac{r \sin (2 \alpha)}{s\,\sin (\alpha+\theta)}
$$ 
and then the variable $\theta$ by 
$$
t = \frac{\sin(\alpha-\theta)}{\sin (\alpha+\theta)}\,.
$$
A straightforward calculation leads to
\begin{equation}
\label{integral}
{\mathcal I} 
=  \text{Im} \int_0^\infty 
\ln \left ( \frac{\sqrt{t \varphi(t)}}{\re^{\ri \alpha}+t \re^{-\ri \alpha}} \right ) \,t^{p-1}
\int_0^\infty \exp \left \{ -v - \frac{4}{s^2} \frac{\varphi(t)}{4 v} \right \} \frac{\rd v}{v} \rd t\,,
\end{equation}
where
$$
\varphi(t) = t + \frac{1}{t} + 2 \cos \alpha\,.
$$
Set 
$$
t = \re^{-2 u}\,. 
$$
Then equation (\ref{integral}) becomes
\begin{multline}
\label{integral2}
{\mathcal I}
=  -2\,\text{Im} \int_{-\infty}^\infty 
\ln \left ( \re^{u+\ri \alpha}+ \re^{-u-\ri \alpha} \right ) \,\re^{-2 p u} \\
\times \int_0^\infty \exp \left \{ -v - \frac{4}{s^2} \frac{\re^{-\ri 2 \alpha}+2 \cosh(2 u) +\re^{\ri 2 \alpha} }{4 v} \right \} \frac{\rd v}{v} \rd u\,.
\end{multline}
We use lemma \ref{besselLemma} with $W = \overline{w} = 2 \re^{-\ri \alpha}/s$ to obtain
\begin{multline}
{\mathcal I} = -4 \,\Imag \left [ K_p \left ( \frac{2}{s} \re^{-\ri \alpha} \right ) 
\,\partial_p K_p \left ( \frac{2}{s} \re^{\ri \alpha} \right ) \right ] \\
= 4 \,\Imag \left [ K_p \left ( \frac{2}{s} \re^{\ri \alpha} \right ) 
\,\partial_p K_p \left ( \frac{2}{s} \re^{-\ri \alpha} \right ) \right ]\,.
\end{multline}
The result is then an immediate consequence of equation (\ref{definitionOfI}).

\section{A result of Goldsheid \& Khoruzhenko (2005)}
\label{goldsheidAppendix}
In order to deduce the existence of the integrated density of states from proposition
\ref{goldsheidKhoruzhenkoProposition}, we shall need some technical results contained
in Goldsheid \& Khoruzhenko (2005).

Let $\{ \mathscr{A}_n \}_{n \in {\mathbb Z}_+}$ be a deterministic sequence of square matrices of increasing dimension $n$, and
let
$$
p_n (z) = \frac{1}{n} \ln \left | \text{det} \left ( \mathscr{A}_n - z \mathscr{I}_n \right ) \right |
= \int_{\mathbb C} \ln \left |w-z \right | \kappa_n (\rd w)\,,
$$
where $\mathscr{I}_n$ is the $n \times n$ identity matrix and 
$$
\kappa_n = \frac{1}{2\pi} \Delta p_n
$$
is the normalised eigenvalue counting measure of $\mathscr{A}_n$. Define
\begin{equation}
\tau_R := \overline{\lim_{n \rightarrow \infty}} \int_{|w| \ge R} \ln |w | \,\kappa_n (\rd w)\,, \quad R \ge 1\,.
\label{tauDefinition}
\end{equation}

\begin{proposition}
Assume that there is a function $p\,:\;{\mathbb C} \rightarrow [-\infty,\,\infty)$ such that
$$
p_n(z) \xrightarrow[n \rightarrow \infty]{} p(z) \quad \text{for Lebesgue-almost every $z \in{\mathbb C}$}\,.
$$
If $\tau_1 < \infty$, then
\begin{enumerate}
\item $p$ is locally integrable.
\item The measure
$$
\kappa := \frac{1}{2 \pi} \Delta p
$$
is a probability measure.
\item We have
$$
\int_{|w| \ge 1} \ln |w|\, \kappa (\text{\em d} w) \le \tau_1 < \infty
$$
and the sequence $\{ \kappa_n \}_{n \in {\mathbb Z}_+}$ converges weakly to $\kappa$.
\end{enumerate}
\label{GKProposition13}
\end{proposition}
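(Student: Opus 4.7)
The approach is potential-theoretic. Each $p_n$ is the logarithmic potential of the probability measure $\kappa_n$, hence subharmonic on $\mathbb C$, with distributional Laplacian $\Delta p_n=2\pi\kappa_n$. The first step is to show that the family $\{p_n\}$ is uniformly bounded above on compacts. Fix a compact $K\subset\mathbb C$ and $R\ge 1$, and split
\[
p_n(z)=\int_{|w|<R}\ln|w-z|\,\kappa_n(\rd w)+\int_{|w|\ge R}\ln|w-z|\,\kappa_n(\rd w).
\]
The first integral is bounded above by $\ln\bigl(R+\sup_{z\in K}|z|\bigr)$, since $\kappa_n$ has mass at most $1$. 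For $R$ large enough that $R>2\sup_{z\in K}|z|$, the elementary estimate $\ln|w-z|\le \ln|w|+\ln 2$ holds on $\{|w|\ge R\}$ uniformly for $z\in K$, so the second integral is bounded above by $\int_{|w|\ge R}\ln|w|\,\kappa_n(\rd w)+\ln 2$, whose $\limsup$ in $n$ is at most $\tau_R+\ln 2\le\tau_1+\ln 2$. Hence $\limsup_n\sup_{z\in K}p_n(z)<\infty$.

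With a uniform upper bound on compacts and pointwise a.e.\ convergence, the standard fact that a locally uniformly bounded-above sequence of subharmonic functions converging a.e.\ actually converges in $L^1_{\rm loc}$ to a subharmonic limit (proved via the submean-value inequality and Fatou) yields $p\in L^1_{\rm loc}(\mathbb C)$ and $p_n\to p$ in $L^1_{\rm loc}$. This already proves (1). Applying the distributional Laplacian, which is continuous on $\mathcal D'(\mathbb C)$, gives $\kappa_n\to\kappa:=\frac{1}{2\pi}\Delta p$ against $C_c^\infty(\mathbb C)$ test functions, and since each $\kappa_n$ is a nonnegative measure, so is $\kappa$.

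To promote this to weak convergence of probability measures I would establish tightness. A Chebyshev-type estimate gives, for $R>1$,
\[
\kappa_n(\{|w|\ge R\})\le \frac{1}{\ln R}\int_{|w|\ge R}\ln|w|\,\kappa_n(\rd w),
\]
whose $\limsup$ in $n$ is at most $\tau_R/\ln R\le \tau_1/\ln R\to 0$ as $R\to\infty$. Hence $\{\kappa_n\}$ is tight; combined with the distributional convergence above and the uniqueness of the limit, this upgrades to weak convergence of finite measures, so $\kappa(\mathbb C)=\lim_n\kappa_n(\mathbb C)=1$, proving (2). For (3), applying the weak convergence to the continuous bounded truncations $\ln|w|\cdot\mathbf 1_{\{1\le|w|\le R\}}$ and Fatou gives
\[
\int_{1\le|w|\le R}\ln|w|\,\kappa(\rd w)\le \liminf_n\int_{1\le|w|\le R}\ln|w|\,\kappa_n(\rd w)\le \tau_1,
\]
and monotone convergence as $R\to\infty$ yields $\int_{|w|\ge 1}\ln|w|\,\kappa(\rd w)\le\tau_1$.

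The main obstacle is the upgrade from pointwise a.e.\ convergence to $L^1_{\rm loc}$ convergence of the $p_n$, which requires controlling the strongly negative part of $\ln|w-z|$ near the eigenvalues of $\mathscr A_n$, uniformly in $n$. This is where the subharmonic structure (providing a Jensen-type lower bound in terms of circle averages, which are controlled by the upper bound of Step 1) together with the finiteness of $\tau_1$ (which prevents mass from escaping to infinity and breaking the upper-bound argument) conspire to make a Vitali-type argument go through.
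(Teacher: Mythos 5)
The paper offers no argument of its own for this statement---it simply cites Goldsheid \& Khoruzhenko (2005), Proposition 1.3---and your potential-theoretic outline is the standard route behind that citation: uniform upper bounds on compacts, $L^1_{\rm loc}$ compactness of subharmonic functions, continuity of $\Delta$ on $\mathcal{D}'$, and tightness from $\tau_1<\infty$. Your Step 1 and the tightness/truncation arguments are essentially fine (minor point: $\ln|w|\,\mathbf{1}_{\{1\le|w|\le R\}}$ is not continuous at $|w|=R$, so test instead against $\min(\ln^+|w|,\ln R)$).

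The genuine gap is at the central step, which you simultaneously assert as a ``standard fact'' and concede, in your closing paragraph, to be ``the main obstacle'' to be settled by an unspecified ``Vitali-type argument''. As stated, the fact is false: a sequence of subharmonic functions that is locally uniformly bounded above and converges a.e.\ need not converge in $L^1_{\rm loc}$ to a locally integrable limit (take $u_n\equiv-n$, or $u_n(z)=n\ln|z|$ on the unit disc, whose a.e.\ limit is $-\infty$); and the hypothesis of the proposition allows $p$ to take the value $-\infty$, so the degenerate case must be excluded, not assumed away. The correct tool is the compactness theorem for subharmonic functions (H\"ormander), which gives a dichotomy: either $p_n\to-\infty$ locally uniformly, or a subsequence converges in $L^1_{\rm loc}$. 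Ruling out the first branch, and obtaining the uniform local $L^1$ control your sketch needs, requires using the normalisation $\kappa_n(\mathbb{C})=1$: the circle-average identity $\frac{1}{2\pi}\int_0^{2\pi}\ln\bigl|z_0+r\re^{\ri\theta}-w\bigr|\,\rd\theta=\max\bigl(\ln r,\ln|w-z_0|\bigr)\ge\ln r$ gives, after integrating in $r$ and against $\kappa_n$, the lower bound $\frac{1}{\pi\rho^2}\int_{|z-z_0|\le\rho}p_n\,\rd A\ge\ln\rho-\tfrac12$ for every $n$ and every $z_0$, uniformly in the location of the eigenvalues. Combined with your Step-1 upper bound, Fatou then yields $p\in L^1_{\rm loc}$, the degenerate branch is excluded, and the usual subsequence/uniqueness argument (any subsequential $L^1_{\rm loc}$ limit must agree a.e.\ with $p$ by the pointwise hypothesis) upgrades to $p_n\to p$ in $L^1_{\rm loc}$ for the full sequence; after that your Laplacian, tightness and truncation steps go through. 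Until this lower-bound/compactness argument is actually carried out, the proof is incomplete precisely where the real work lies.
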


\begin{proof}
See Goldsheid \& Khoruzhenko (2005), proposition 1.3.
\end{proof}

To apply this result in our context, we set $\mathscr{A}_n = \mathscr{J}_n$.
Proposition \ref{goldsheidKhoruzhenkoProposition} then takes care of the first assumption. There remains
to show the finiteness of $\tau_1$. Goldsheid \& Khoruzhenko
remark that the following inequalities hold:
$$
\tau_1 \le \overline{\lim_{n \rightarrow \infty}} \frac{1}{2n} \text{tr} \ln \left ( \mathscr{I}_n
+ \mathscr{J}_n \mathscr{J}_n^\ast \right )
$$
and
\begin{equation}
\frac{1}{n} \text{tr} \ln \left ( \mathscr{I}_n
+ \mathscr{J}_n \mathscr{J}_n^\ast \right ) 
\le \frac{\alpha}{n} \sum_{j=0}^{n-1} \ln \left ( 1+\beta |{\mathbf r}_j |^2 \right )  \,,
\label{goldsheidKhoruzhenkoInequality}
\end{equation}
for some positive constants $\alpha$ and $\beta$ independent of $n$ and ${\mathbf r}_j$, the
$j$th row of $\mathscr{J}_n$. Now,
\begin{multline}
\notag
| {\mathbf r}_n |^2 = h_{n-1}^2 + v_n^2 + h_n^2 \\ = \frac{1}{s_{2n+2}^2 s_{2n+1} s_{2n+3}}
+ \frac{1}{s_{2n+1}^2} \left ( \frac{1}{s_{2n+2}}+\frac{1}{s_{2n+3}} \right )^2 
+ \frac{1}{s_{2n}^2 s_{2n-1} s_{2n+1}} 
\end{multline}
and, so by repeated use of the elementary inequality
$$
a b \le \frac{1}{2} \left ( a^2 + b^2 \right )\,,
$$
we obtain readily
$$
| {\mathbf r}_n |^2 \le \frac{5}{2} \left ( \frac{1}{s_{2n-1}^4} + \cdots + \frac{1}{s_{2n+3}^4} \right )\,. 
$$
It follows that
\begin{multline}
\notag
\ln \left ( 1+ \beta | {\mathbf r}_n |^2 \right ) \le 
\ln \left ( 1+ \frac{25}{2} \beta  \max \left \{ s_{2n-1}^{-4},\,\ldots,\, s_{2n+3}^{-4} \right \} \right ) \\
\le \ln \left ( 1+ \frac{25}{2} \beta s_{2n-1}^{-4}  \right ) + \cdots +
\ln \left ( 1+ \frac{25}{2} \beta s_{2n+3}^{-4}  \right )\,.
\end{multline}
We deduce from the ergodic theorem that the right-hand side of equation (\ref{goldsheidKhoruzhenkoInequality})
is bounded independently of $n$ provided that
$$
\int_{{\mathbb R}_+} \ln \left ( 1 + \frac{25}{2} \frac{\beta}{s^4} \right ) \,\mu(\rd s) < \infty\,.
$$
This last inequality follows easily from our hypothesis that
$$
\int_{{\mathbb R}_+} \left | \ln s \right |\,\mu(\rd s) < \infty \quad \text{and} \quad 
\int_{{\mathbb R}_+} s^\varepsilon \,\mu(\rd s) < \infty \,.
$$

\end{document}